\newtheorem{theorem}{Theorem}
\newtheorem{corollary}{Corollary}
\theoremstyle{definition}
\newtheorem{definition}{Definition}
\theoremstyle{remark}
\newtheorem{remark}{Remark}
\begin{document}

\sloppy
 
\title{Coded MapReduce} 

\author[*]{Songze~Li \thanks{Parts of this work were presented in 53rd Allerton Conference, Sept. 2015 \cite{LMA_all}.}}
\author[$\dag$]{Mohammad~Ali~Maddah-Ali}
\author[*]{A.~Salman~Avestimehr}
\affil[*]{University of Southern California, Los Angeles, CA, USA }
\affil[$\dag$]{Bell Labs, Holmdel, NJ, USA}

\maketitle

\begin{abstract} 
MapReduce is a commonly used framework for executing data-intensive jobs on distributed server clusters. We introduce a variant implementation of MapReduce, namely ``Coded MapReduce'', to substantially reduce the inter-server communication load for the shuffling phase of MapReduce, and thus accelerating its execution. The proposed Coded MapReduce exploits the repetitive mapping of data blocks at different servers to create coding opportunities in the shuffling phase to exchange (key,value) pairs among servers much more efficiently. We demonstrate that Coded MapReduce can cut down the total inter-server communication load by a \emph{multiplicative} factor that grows linearly with the number of servers in the system and it achieves the minimum communication load within a constant multiplicative factor. We also analyze the tradeoff between the ``computation load'' and the ``communication load'' of Coded MapReduce.
\end{abstract}

\section{Introduction}
MapReduce~\cite{dean2004mapreduce} is a programming model that enables the distributed processing of large-scale datasets on a cluster of commodity servers. The desirable features of MapRuduce, such as scalability, simplicity and fault-tolerance~\cite{lee2012parallel,jiang2010performance}, have made this framework popular to perform data-intensive tasks in text/graph processing, machine learning and bioinformatics. Simply speaking, in MapReduce framework, each of the input data blocks, stored across the distributed servers (e.g. HDFS in Hadoop \cite{had}), is assigned to a worker node to process. The processing \emph{maps} the input blocks to some intermediate (key,value) pairs. In the next step, referred as \emph{data shuffling}, the intermediate (key,value) pairs are transferred via inter-server communication links to a set of processors to be \emph{reduced} to the final results. 

The data shuffling phase is one of the key bottlenecks for improving the runtime performance of MapReduce. In fact, as observed in~\cite{chowdhury2011managing}, using a Hadoop~\cite{had} cluster, 33\% of the job execution time is spent on data shuffling. As a result, many optimization methods including combining intermediate (key,value) pairs before shuffling~\cite{dean2004mapreduce,rajaraman2011mining}, optimal flow scheduling across network paths~\cite{greenberg2009vl2,al2010hedera} and employment of distributed cache memories~\cite{zhang2009accelerating,ekanayake2010twister}, have been proposed to accelerate the data shuffling phase of MapReduce programs.

In this paper, we introduce ``Coded MapReduce'', a new framework that \emph{enables} and \emph{exploits} a particular form of coding to significantly reduce the communication load of the shuffling phase. Compared to the conventional MapReduce approach, we demonstrate that Coded MapReduce can substantially cut down the inter-server communication load by a \mbox{\emph{multiplicative}} factor that grows linearly with the number of servers in the system. We also show that Coded MapReduce achieves the minimum shuffling load within a constant multiplicative factor regardless of the system parameters.

Coded MapReduce exploits the repetitive mappings of the same data block at different servers, in order to enable coding. More specifically, for a server cluster with $K$ servers, interconnected through a multicast LAN network, and repetitive mapping of each data block on $r$ fraction of the servers, we propose a particular strategy to assign the Map tasks, such that in the shuffling phase the data demand of each group of approximately $rK>1$ servers can be satisfied by a \emph{single coded transmission}, which we call a \emph{coded multicast opportunity}. This would provide a multiplicative coding gain of $rK$, which scales linearly with the number of servers. Furthermore, even if repetition of data blocks is very small (for example, if each data block is mapped over only one other server, i.e. $rK = 2$), we demonstrate that Coded MapReduce can still substantially reduce the communication load (for example by more than 50\%). As a result, a minor increase in the replication of mapping data blocks at different servers can largely impact the shuffling load of MapReduce.

The idea of Coded MapReduce is inspired by recent results on cache networks, arguing that cache memories can be used not only to deliver part of the contents locally, but also to create some coding opportunities and reduce the traffic significantly~\cite{maddah2014fundamental,maddah2013decentralized}. Surprisingly the gain of coding in reducing the traffic is significantly larger than the gain of local delivery, and indeed it can grow with the size of the network. This result has been further generalized to various cache networks in~\cite{ji2014fundamental,karamchandani2014hierarchical}. Interestingly, we demonstrate that similar coding opportunities can also be created in the MapReduce framework via a careful assignment of repetitive Map tasks across servers. 

The efficiency of Coded MapReduce in data shuffling builds upon the repetitive executions of the Map tasks, which however requires more processing time for the server cluster to compute the intermediate results. We analytically and numerically evaluate the tradeoff between the Map tasks processing time and the inter-server communication load using Coded MapReduce, based on which clients can choose appropriate operating points to minimize the overall execution time.

The rest of the paper is organized as follows. In Section II, we introduce a MapReduce framework and define the inter-server communication load for the shuffling phase. We then motivate Coded MapReduce through a word-counting example in Section III. Lower bound on the minimum shuffling load (proved in Section VI) and the shuffling load achieved by Coded MapReduce are presented in Section IV, where Coded MapReduce are shown to be approximately optimal. We formally describe the Coded MapReduce framework in Section V, and then analyze its impact on reducing the inter-server communication load. We explore the tradeoff between the computation load of Map tasks and the communication load using Coded MapReduce in Section~VII. We finally conclude the paper in Section VIII.

\section{Problem Statement}
For an input file, the job is to evaluate $Q$ pairs of (key,value)'s, denoted by $\{(W_q, u_q)\}_{q=1}^Q$. The keys $W_q$, $q \in \{1, \ldots, Q\}$ are given. The corresponding output values $u_q$, $q \in \{1,\ldots,Q\}$, are evaluated from the input file. The input file consists of $N$ disjoint subfiles (data blocks). The available processing resources include a cluster of $K$ servers.\\ 

\noindent {\bf Example (World Counting).}  Consider a canonical word-counting job of finding the numbers of times 4 given words, denoted by $A$, $B$, $C$ and $D$,  appeared in a book with $N=12$ chapters using $K=4$ servers. Each word itself represents a unique key. Thus we have $Q=4$ keys each corresponding to a word to be counted and $K=4$ servers. The final output results are 4 (key,value) pairs $(A,a)$, $(B,b)$, $(C,c)$ and $(D,d)$ indicating that there are $a$ occurrences of $A$, $b$ occurrences of $B$, $c$ occurrences of $C$ and $d$ occurrences of $D$ in the book. $\hfill \square$

A MapReduce approach~\cite{dean2004mapreduce} carries out the job distributedly across the available $K$ servers. As illustrated in Fig.~\ref{fig:flow}, the MapReduce implementation consists of 3 steps: Map Tasks Assignment, Map Tasks Execution and Reduce Tasks, run over the set of $K$ servers. In what follows we describe each of these 3 steps in detail. 

\begin{figure}[htbp]
   \centering
   \includegraphics[width=0.6\textwidth]{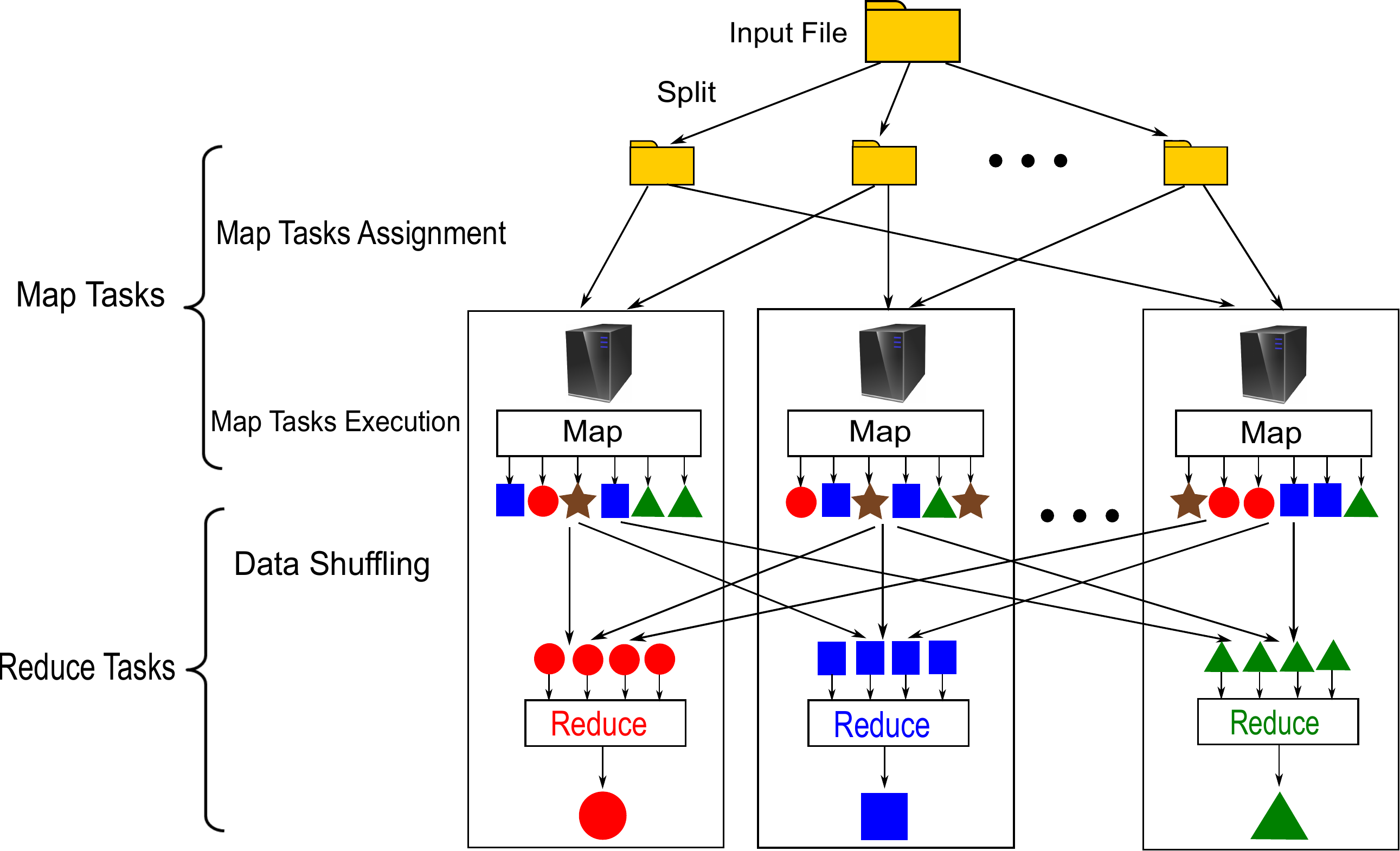}
   \caption{The workflow of a generic MapReduce implementation on $K$ servers. The Map tasks consist of Map tasks assignment and Map tasks execution, and the Reduce tasks consist of data shuffling and the executions of the Reduce functions. Notice that each server executes both Map and Reduce tasks.}
   \label{fig:flow}
\end{figure}

\subsection*{Step 1: Map Tasks Assignment}    
To start a MapReduce execution, a master controller (JobTracker) assigns the tasks of mapping the $N$ subfiles among the available $K$ servers. Each server will focus on processing its assigned subfiles (see details in Step 2). For a given design parameter $p \in \left\{\frac{1}{K},\frac{2}{K},\ldots,1\right\}$, the assignment is done such that each subfile is assigned to be mapped at $pK$ servers and each server is assigned $pN$ subfiles (assuming that $pN$ is an integer). While conventional MapReduce implementations require each subfile to be mapped at exactly one server ($p = \frac{1}{K}$), the master controller can repetitively schedule two or more servers to map the same subfile. 

We denote the set of indices of the subfiles assigned to Server $k$ as $\mathcal{M}_k$ and the set of indices of the servers Subfile $n$ is assigned to as $\mathcal{A}_n$. Either $\{\mathcal{M}_k\}_{k=1}^K$ or $\{\mathcal{A}_n\}_{n=1}^N$ completely specifies the Map tasks assignment.

\noindent {\bf Example (Word-Counting: Map Tasks Assignment).} In our word-counting example, $Q=K=4$, $N=12$. For instance $p$ can be set to $\frac{2}{K}=\frac{1}{2}$, i.e., each subfile is assigned to be mapped at $pK=2$ servers.  We can employ a \emph{naive} Map tasks assignment that assigns each of the first $pN$ (6 in this example) chapters to each of the first $pK$ servers, each of the second $pN$ chapters to each of the second $pK$ servers and so forth. For such an assignment,  ${\mathcal{M}_1=\mathcal{M}_2=\{1,2,3,4,5,6\}}$, ${\mathcal{M}_3=\mathcal{M}_4=\{7,8,9,10,11,12\}}$. That is, Server 1 and Server 2 are assigned to map Chapters 1, 2, 3, 4, 5, 6 and Server 3 and Server 4 are assigned to map Chapters 7, 8, 9, 10, 11, 12. $\hfill \square$

\subsection*{Step 2: Map Tasks Execution}
After the Map tasks assignment, Server $k$, $k \in \{1,\ldots,K\}$, starts to process each Subfile $n$ for all $n \in \mathcal{M}_k$, separately, maps it to a set of $Q$ intermediate (key,value) pairs $(W_1,v_{1n})[n],\ldots,(W_Q,v_{Qn})[n]$. For each $q \in \{1,\ldots,Q\}$ and the intermediate pair $(W_q,v_{qn})[n]$, $W_q$ is the key and ${v_{qn} \in \mathbb{F}_{2^F}}$ is the intermediate value of $W_q$ in Subfile $n$. 

Conventionally, the Map tasks execution continues until that each of the $N$ subfiles has been mapped to (key,value) pairs at one server. Here we introduce a key design parameter $r$, $r \in \left\{\frac{1}{K},\frac{2}{K},\ldots, p\right\}$ and enforce that for all $n \in \{1,\ldots,N\}$, as soon as $rK$ servers finish mapping Subfile $n$, the rest of the servers in $\mathcal{A}_n$ abort the task of mapping Subfile $n$. Thus each subfile is mapped at $rK$ servers by the end of Map tasks execution. We denote the set of indices of the servers that finish mapping Subfile $n$ as $\mathcal{A}_n' \subseteq \mathcal{A}_n$, where $|\mathcal{A}_n'|= rK$ for all $n$. $r$ reflects the redundancy of mapping the same subfile across different servers, and large $r$ can cause additional delay to the overall Map tasks execution. However, as we demonstrate later, having repetitive mapping outcomes at distinct servers can help to significantly reduce the communication load of the shuffling phase between Map and Reduce tasks.\\

\begin{figure}[htbp]
   \centering
   \includegraphics[width=0.98\textwidth]{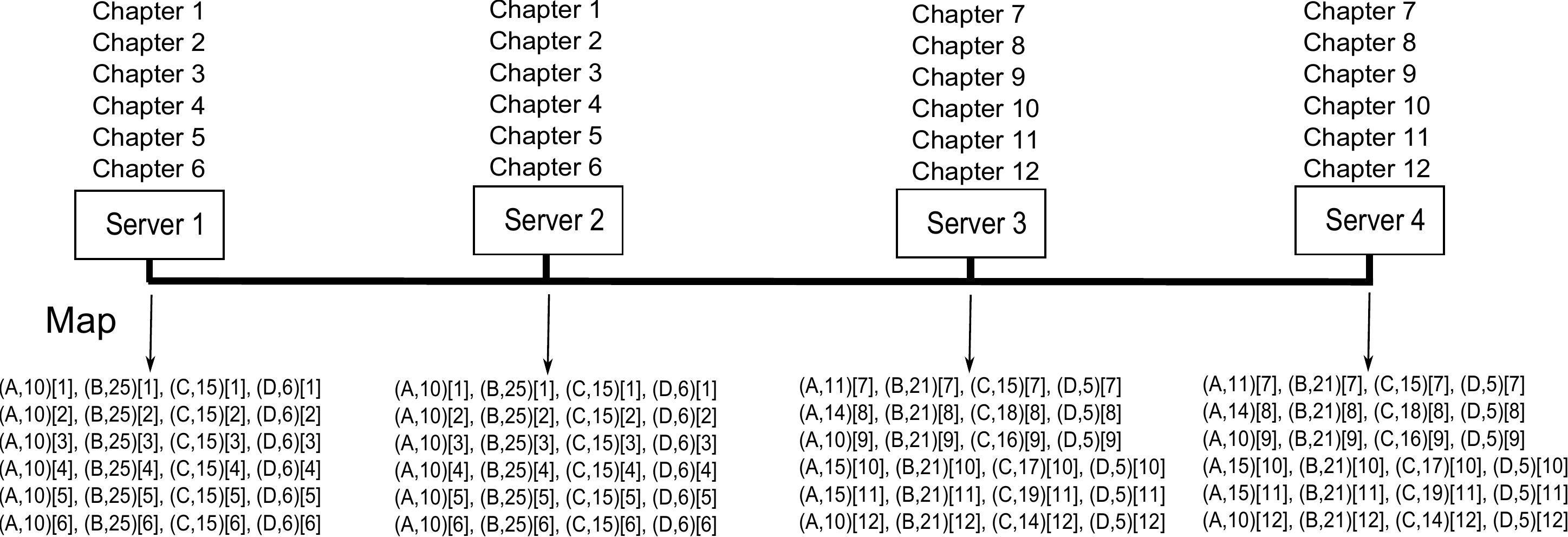}
   \caption{The Map tasks execution for the naive Map tasks assignment. Each server counts the number of occurrences of $A$, $B$, $C$ and $D$ in the assigned chapters, and generates 4 (key,value) pairs. For example, a pair $(A,10)[1]$ indicates that $A$ appears 10 times in Chapter 1.}
   \label{fig:exe1}
\end{figure}

\noindent {\bf Example (Word-Counting: Map Tasks Execution).} To continue our word-counting example, having adopted the naive Map tasks assignment, the master controller sets $r=p=\frac{1}{2}$ such that each server has to finish mapping all of the assigned $pN = 6$ chapters. As illustrated in Fig.~\ref{fig:exe1}, Server $k$, $k \in\{1,2,3,4\}$, maps Chapter $n$ in $\mathcal{M}_k$ into 4 intermediate (key,value) pairs $(A,a_n)[n]$, $(B,b_n)[n]$, $(C,c_n)[n]$ and $(D,d_n)[n]$. For the pair $(A,a_n)[n]$, $A$ is the key, $a_n$ is the value such that $a_n$ is the number of occurrences of $A$ in Chapter $n$ \footnote{While a conventional implementation of the Map function in word-counting jobs maps each encountered word say $A$ into an intermediate pair $(A,1)$, we employ combiners~\cite{dean2004mapreduce} to generate the total number of occurrences of interested words in a chapter.}. The definitions of pairs with keys $B$, $C$ and $D$ follow similarly.  $\hfill \square$

\subsection*{Step 3: Reduce Tasks}
The goal of this step is to evaluate the final output (key, value) pairs, in a distributed fashion,  from the intermediate (key,value) pairs, evaluated in Step 2, i.e. Map tasks execution.
One reducer is responsible for evaluating one of the $Q$ keys, namely $W_1,\ldots,W_Q$, and thus a total of $Q$ reducers are required to be executed on $K$ servers. The executions of these reducers are distributed uniformly across $K$ servers. For simplicity, we assume that $\frac{Q}{K}$ is an integer and each server executes a \emph{disjoint} set of $\frac{Q}{K}$ reducers. We denote the reducer distribution as $\mathcal{D}\triangleq \left(\mathcal{W}_1,\ldots,\mathcal{W}_K\right)$, where $\mathcal{W}_k$, $k \in \{1,\ldots,K\}$, is the set of the indices of the keys evaluated at Server $k$. Thus any valid reducer distribution must satisfy:
\begin{enumerate}
\item $|\mathcal{W}_k|=\frac{Q}{K}$ for all $k \in \{1,\ldots,K\}$,
\item $\underset{k = 1,\ldots,K}{\cup}\mathcal{W}_k = \{1,\ldots,Q\}$,
\item $\mathcal{W}_k \cap \mathcal{W}_{k'} =\emptyset$ for $k \neq k'$.
\end{enumerate}

To evaluate the key $W_q$ for some $q \in \mathcal{W}_k$, the corresponding reducer at Server $k$ needs the values of $W_q$ in all $N$ subfiles $\{v_{qn}\}_{n=1}^N$. By the end of the Map tasks execution, Server $k$ has mapped a subset $\mathcal{M}_k'$ of its assigned subfiles $\mathcal{M}_k$ and knows the values ${\{v_{qn}:n \in \mathcal{M}_k'\}}$, thus the remaining values $\{v_{qn}: n \notin \mathcal{M}_k'\}$ are needed from other servers to execute the reducer for $W_q$. We consider a multicast LAN network \footnote{While current main-stream cloud computing platforms like Amazon EC2 do not support L2 multicast/broadcast, we will demonstrate the significant benefit of enabling multicast in the efficiency of data movement using our proposed Coded MapReduce scheme.} where the $K$ servers are interconnected through a shared link. Next, we formally define the data shuffling scheme to exchange the required values for reduction and the resulting communication load.

\begin{definition}[Data Shuffling]
A data shuffling scheme for a MapReduce job with $N$ subfiles, $Q$ keys, $K$ servers and parameters $p,r, \{\mathcal{M}_k'\}_{k=1}^K$, is defined as follows:
\begin{itemize}
\item At each time slot $t \in \mathbb{N}$, one of the $K$ servers, say Server $k$, creates a message of $F$ bits as a function of the values from the subfiles that are mapped by that server (i.e., $\mathcal{M}_k'$), denoted by ${X_{k,t} = \phi_{k,t}\left(\{v_{qn}:q \in \{1,\ldots,Q\},n \in \mathcal{M}_k'\}\right)}$, and sends it to all other servers via the multicast LAN network interconnecting them.
\item The communication process continues for $T$ times slots, until Server $k$, for all $k \in \{1,\ldots,K\}$, is able to successfully construct the required values to execute the reducers for the keys in $\mathcal{W}_k$, based on the messages it receives from other servers and its own Map outcomes for the keys in $\mathcal{W}_k$ (i.e., $\{v_{qn}:q \in \mathcal{W}_k, n \in \mathcal{M}_k'\}$). 
\item The communication load of a data shuffling scheme, denoted by $L$, is defined as
 $L \triangleq \mathbb{E}\{T\}$, i.e., the \emph{average} number of  inter-server communication time slots required for that scheme, where the average is taken over all possible Map task executions of each subfile (i.e., which of the $rK$ servers out of the $pK$ servers have mapped each subfile).  $\hfill\square$
\end{itemize}
\end{definition}

\begin{remark}
Given an instance of the Map tasks execution $\{\mathcal{M}_k'\}_{k=1}^K$, each server knows the intermediate results in mapped subfiles for all keys. Therefore, for all valid reducer distributions (i.e., $K$ mutually disjoint subsets each with $\frac{Q}{K}$ keys are reduced respectively at the $K$ servers), their communication loads are identical. In other words, for a particular Map tasks execution outcome, the communication load is independent of the reducer distribution.    $\hfill \square$
\end{remark}

Given a MapReduce job of using a cluster of $K$ servers to evaluate $Q$ keys in an input file consisting of $N$ subfiles where each subfile is assigned to be mapped at $pK$ servers for some $p \in \{\frac{1}{K},\ldots,1\}$, we are interested in the problem of minimizing the communication load over the Map tasks assignment $\mathcal{M}_1,\ldots,\mathcal{M}_K$ and the data shuffling scheme. We define $L^*(r)$ as the minimum communication load when each subfile is mapped at $rK$ servers out of the $pK$ servers it is assigned to.

As a baseline, we can consider the conventional MapReduce approach where each subfile is assigned to and mapped at only one server ($pK=rK=1$). In this setting, each server maps $\frac{N}{K}$ subfiles, obtaining $\frac{N}{K}$ intermediate values for each of its assigned keys. Since each server reduces $\frac{Q}{K}$ keys, the communication load for the conventional approach is
\begin{equation}\label{eq:conv}
L_{\text{conv}}= K \cdot \frac{Q}{K} \cdot \left(N-\frac{N}{K}\right) = QN\left(1-\frac{1}{K}\right).
\end{equation}

By increasing $p$ and $r$ beyond $\frac{1}{K}$, each subfile is now repeatitively mapped at $rK>1$ servers and the total number of executed Map tasks increases by $rK$ times: $\sum \limits_{k=1}^K |\mathcal{M}'_k| = rKN$. Server $k$, $k \in \{1,\ldots,K\}$, needs another $\frac{Q}{K}\left(N-|\mathcal{M}_k'|\right)$ intermediate values to execute its $\frac{Q}{K}$ reducers. These data requests can for example be satisfied by a simple \emph{uncoded} data shuffling scheme such that each of the required intermediate values is sent over the shared link at a time, achieving the following communication load 
\begin{equation}\label{eq:uncode}
L_{\text{uncoded}}(r) = \frac{Q}{K}\sum \limits_{k=1}^{K}\left(N-\mathbb{E}\left\{\left|\mathcal{M}_k'\right|\right\}\right)= QN(1-r). 
\end{equation}

Let us illustrate the uncoded scheme by describing the shuffling phase of our running word-counting example.

\noindent {\bf Example (Word-Counting: Data Shuffling via Uncoded Scheme).} Since ${Q=K=4}$, each server executes one reducer, i.e., Server 1 evaluates $A$, Server 2 evaluates $B$, Server 3 evaluates $C$, and Server 4 evaluates $D$. 

Based on the results of the Map tasks execution (see Fig.~\ref{fig:exe1}), Server 1 and Server 2 need the values of $A$ and $B$ respectively in Chapters 7, 8, 9, 10, 11, 12. Server 3 and Server 4 need the values of $C$ and $D$ respectively in Chapters 1, 2, 3, 4, 5, 6. An uncoded data shuffling is carried out as follows:
\begin{enumerate}
\item Server 3 sends pairs $(A,11)[7]$, $(A,14)[8]$, $(A,10)[9]$, $(A,15)[10]$, $(A,15)[11]$, $(A,10)[12]$ and then $(B,21)[7]$, $(B,21)[8]$, $(B,21)[9]$, $(B,21)[10]$, $(B,21)[11]$, $(B,21)[12]$.
\item Server 2 sends pairs $(C,15)[1]$, $(C,15)[2]$, $(C,15)[3]$, $(C,15)[4]$, $(C,15)[5]$, $(C,15)[6]$ and then $(D,6)[1]$, $(D,6)[2]$, $(D,6)[3]$, $(D,6)[4]$, $(D,6)[5]$, $(D,6)[6]$.
\end{enumerate}

After the communication every server knows the values of its interested word in all 12 chapters, and these values are passed into a reducer to compute the final result. The shuffling phase lasts for 24 time slots and thus the communication load of the uncoded scheme is 24, which is consistent with equation~(\ref{eq:uncode}) for $N=12$, $Q=4$ and $r = \frac{1}{2}$. Notice that, if we had employed the conventional MapReduce approach where each subfile is mapped at only one server, then the communication load of this particular job would have been 36 (this is obtained by setting  $N=12$ and $Q=K=4$ in equation~(\ref{eq:conv})). $\hfill \square$

Comparing equations~(\ref{eq:conv}) and (\ref{eq:uncode}), we notice that by repeatedly mapping the same subfile at more than one server ($rK \geq 2$), the communication load of the shuffling phase in MapReduce can be improved by a factor of $\frac{1-\frac{1}{K}}{1-r}$, when using a simple uncoded scheme. This improvement in the communication load results from the fact that by mapping each subfile repeatedly at multiple servers, the servers know values from $rK$ times more subfiles than the conventional approach, thus requiring less values communicated during data shuffling to execute their reducers. We denote this gain as the \emph{repetition gain}, which is due to knowing values from more subfiles \emph{locally} at each server. 

As we will show next, in addition to the repetition gain, repeatedly mapping each subfile at multiple servers can have a much more significant impact on reducing the communication load of the data shuffling, which can be achieved by \emph{a more careful assignment of Map tasks to servers} and \emph{exploiting coding in the data shuffling phase}. We will next illustrate this through a motivating example, which forms the basis of the general Coded MapReduce framework that we will later present in Section~V.

\section{Coded MapReduce: A Motivating Example}
In this section we motivate Coded MapReduce via a simple example. In particular, we demonstrate through this example that, by carefully assigning Map tasks to the servers, there will be novel coding opportunities in the shuffling phase that can be utilized to significantly reduce the inter-server communication load of MapReduce.

We consider the same word-counting job of counting 4 words $A$, $B$, $C$ and $D$ in a book with 12 chapters using 4 servers. While maintaining the same number subfiles to map at each server (6 in this case), we consider a new Map tasks assignment as follows. \\

\noindent \emph{Map Tasks Assignment}

Instead of using the naive assignment, the master controller assigns the Map tasks as follows: $\mathcal{M}_1=\{1,2,3,4,5,6\}$, $\mathcal{M}_2=\{1,2,7,8,9,10\}$, $\mathcal{M}_3=\{3,4,7,8,11,12\}$, $\mathcal{M}_4=\{5,6,9,10,11,12\}$. Notice that in this assignment each chapter is assigned to exactly two servers and every two servers share exactly two chapters.\\

\begin{figure}[htbp]
   \centering
   \includegraphics[width=0.98\textwidth]{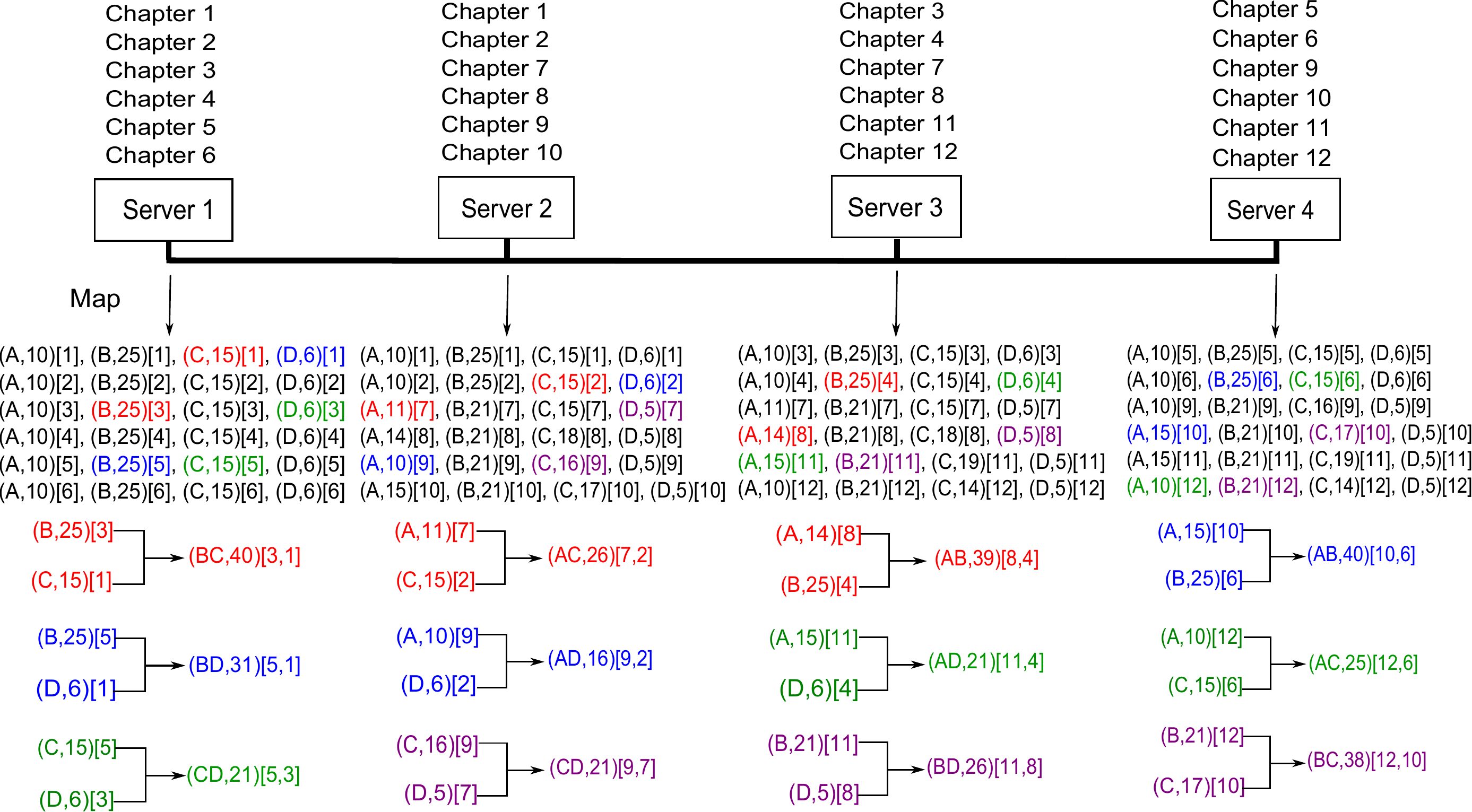}
   \caption{The Map tasks execution of Coded MapReduce for a word-counting job. Having generated 24 (key,value) pairs, one for each word and each of the assigned chapters, each server further generates 3 coded (key,value) pairs, each with a different color by summing up the last value of two intermediate pairs with the same color. For example at Server 1, the red coded pair $(BC,40)[3,1]$ generated from $(B,25)[3]$ and $(C,15)[1]$ indicates that the number of occurrences of $B$ in Chapter 3 and the number of occurrences of $C$ in Chapter 1 sum up to 40.}
   \label{fig:exe2}
\end{figure}

\noindent \emph{Map Tasks Execution}

The master controller sets $r=p$ such that each server has to finish mapping all assigned chapters. The execution of the Map tasks is different from that of the naive assignment such that after generating 4 intermediate (key,value) pairs for each of the assigned chapters, each server generates 3 additional coded (key,value) pairs as follows (see Fig.~\ref{fig:exe2}):
\begin{itemize}
\item Server 1 adds up the values of $(B,25)[3]$ and $(C,15)[1]$ to generate a pair $(BC,40)[3,1]$, adds up the values of $(B,25)[5]$ and $(D,6)[1]$ to generate another pair $(BD,31)[5,1]$, and adds up the values of $(C,15)[5]$ and $(D,6)[3]$ to generate a third pair $(CD,21)[5,3]$,
\item Server 2 adds up the values of $(A,11)[7]$ and $(C,15)[2]$ to generate a pair $(AC,26)[7,2]$, adds up the values of $(A,10)[9]$ and $(D,6)[2]$ to generate another pair $(AD,16)[9,2]$, and adds up the values of $(C,16)[9]$ and $(D,5)[7]$ to generate a third pair $(CD,21)[9,7]$,
\item Server 3 adds up the values of $(A,14)[8]$ and $(B,25)[4]$ to generate a pair $(AB,39)[8,4]$, adds up the values of $(A,15)[11]$ and $(D,6)[4]$ to generate another pair $(AD,21)[11,4]$, and adds up the values of $(B,21)[11]$ and $(D,5)[8]$ to generate a third pair $(BD,26)[11,8]$,
\item Server 4 adds up the values of $(A,15)[10]$ and $(B,25)[6]$ to generate a pair $(AB,40)[10,6]$, adds up the values of $(A,10)[12]$ and $(C,15)[6]$ to generate another pair $(AC,25)[12,6]$, and adds up the values of $(B,21)[12]$ and $(C,17)[10]$ to generate a third pair $(BC,38)[12,10]$.
\end{itemize} 

A coded pair $(W_1W_2,x)[n_1,n_2]$ has key $W_1W_2$ and value $x$, and it indicates that there are $x$ occurrences in total of Word $W_1$ in Chapter $n_1$ and Word $W_2$ in Chapter $n_2$. \\

The Reduce tasks are distributed the same as before: Server 1 evaluates $A$, Server 2 evaluates $B$, Server 3 evaluates $C$ and Server 4 evaluates $D$. \\

\noindent \emph{Data Shuffling}

After executing the Map tasks, values from 6 chapters are missing at each server to execute the reducer. To fulfill the data requests for reduction, the data shuffling is carried out such that each server sends the 3 coded pairs generated during Map tasks execution:
\begin{enumerate}
\item Server 1 sends pairs $(BC,40)[3,1]$, $(BD,31)[5,1]$ and $(CD,21)[5,3]$,
\item Server 2 sends pairs $(AC,26)[7,2]$, $(AD,16)[9,2]$ and $(CD,21)[9,7]$,
\item Server 3 sends pairs $(AB,39)[8,4]$, $(AD,21)[11,4]$ and $(BD,26)[11,8]$,
\item Server 4 sends pairs $(AB,40)[10,6]$, $(AC,25)[12,6]$ and $(BC,38)[12,10]$.
\end{enumerate} 

Having received all coded pairs, each server performs an additional decoding operation before executing the final Reduce function:
\begin{enumerate}
\item Server 1 subtracts the values of $(C,15)[2]$, $(D,6)[2]$, $(B,25)[4]$, $(D,6)[4]$, $(B,25)[6]$ and $(C,15)[6]$ from the values of $(AC,26)[7,2]$, $(AD,16)[9,2]$, $(AB,39)[8,4]$, $(AD,21)[11,4]$, $(AB,40)[10,6]$ and $(AC,25)[12,6]$ respectively to decode 6 pairs it needs to count $A$: $(A,11)[7]$, $(A,10)[9]$, $(A,14)[8]$, $(A,15)[11]$, $(A,15)[10]$ and $(A,10)[12]$,
\item Server 2 subtracts the values of $(C,15)[1]$, $(D,6)[1]$, $(A,14)[8]$, $(D,5)[8]$, $(A,15)[10]$ and $(C,17)[10]$ from the values of $(BC,40)[3,1]$, $(BD,31)[5,1]$, $(AB,39)[8,4]$, $(BD,26)[11,8]$, $(AB,40)[10,6]$ and $(BC,38)[12,10]$ respectively to decode 6 pairs it needs to count $B$: $(B,25)[3]$, $(B,25)[5]$, $(B,25)[4]$, $(B,21)[11]$, $(B,25)[6]$ and $(B,21)[12]$,
\item Server 3 subtracts the values of $(B,25)[3]$, $(D,6)[3]$, $(A,11)[7]$, $(D,5)[7]$, $(A,10)[12]$ and $(B,21)[12]$ from the values of $(BC,40)[3,1]$, $(CD,21)[5,3]$, $(AC,26)[7,2]$, $(CD,21)[9,7]$, $(AC,25)[12,6]$ and $(BC,38)[12,10]$ respectively to decode 6 pairs it needs to count $C$: $(C,15)[1]$, $(C,15)[5]$, $(C,15)[2]$, $(C,16)[9]$, $(C,15)[6]$ and $(C,17)[10]$,
\item Server 4 subtracts the values of $(B,25)[5]$, $(C,15)[5]$, $(A,10)[9]$, $(C,16)[9]$, $(A,15)[11]$ and $(B,21)[11]$ from the values of $(BD,31)[5,1]$, $(CD,21)[5,3]$, $(AD,16)[9,2]$, $(CD,21)[9,7]$, $(AD,21)[11,4]$ and $(BD,26)[11,8]$ respectively to decode 6 pairs it needs to count $D$: $(D,6)[1]$, $(D,6)[3]$, $(D,6)[2]$, $(D,5)[7]$, $(D,6)[4]$ and $(D,5)[8]$.
\end{enumerate} 

Now each server knows the value of the interested word in each of the 12 chapters, and passes these values into a Reduce function to generate the final result. Each server accesses the shared link 3 times during the shuffling phase and the total communication load is $12$.

We notice that having successfully communicated the required values for the Reduce functions, the communication load of the proposed Coded MapReduce is 66\% less than that of the conventional MapReduce approach, and 50\% less than that of the naive assignment and uncoded shuffling scheme. The savings in the communication load result from the fact that each of the coded pairs $(W_1W_2,v_{1n_1}+v_{2n_2})[n_1,n_2]$ simultaneously delivers $v_{1n_1}$ to the server counting $W_1$ and $v_{2n_2}$ to the server counting $W_2$, given that $v_{1n_1}$ is known at the server counting $W_2$ and $v_{2n_2}$ is known at the server counting $W_1$ prior to data shuffling. For example because $b_3$ is needed by Server 2 and known at Server 3 and $c_1$ is needed by Server 3 and known at Server 2, having Server 1 send $(BC,b_3+c_1)[3,1]$ simultaneously delivers $b_3$ to Server 2 and $c_1$ to Server 3. We call such opportunities of effectively communicating multiple values through a single use of the shared link as \mbox{\emph{Coded Multicast}}.  Similar ideas of optimally creating and exploiting the coding opportunities have been first used in caching problems~\cite{maddah2014fundamental,maddah2013decentralized,ji2014fundamental}. This type of  coding  is  closely related to the network coding~\cite{ahlswede2000network} problem and the index coding problem~\cite{birk2006coding,bar2011index}. More detailed discussion about the connections between these problems can be found in~\cite{maddah2014fundamental}.

\begin{remark}
For this particular word-counting example, one can save communication load by adding the word counts in different chapters right after Map tasks execution and sending the sum instead of the counts in individual chapters. However, we emphasize that for a general MapReduce job where such pre-reduction might not be applicable (Reduce function is not associative or commutative), the proposed Coded MapReduce scheme, which is presented in detail in Section~V, can still provide a significant gain in reducing the shuffling load. With sufficiently large number of subfiles, this gain scales linearly with the number of servers in the system. $\hfill \square$
\end{remark}

\section{Main Results}
In this section we present and discuss upper and lower bounds on the minimum communication load $L^*(r)$ of a general MapReduce job. We also analytically and numerically compare the communication loads of different Map tasks assignment and data shuffling schemes to demonstrate the substantial reduction in communication load of our proposed Coded MapReduce scheme compared with the conventional MapReduce approach and the uncoded shuffling scheme. 

\begin{theorem}
Consider a job of using $K$ servers to evaluate $Q$ keys in an input file consisting of $N$ subfiles, where each subfile is repetitively assigned to $pK$ servers and is randomly and uniformly mapped at $rK$ of those servers for some $r \in \{\frac{1}{K},\ldots,p\}$. The minimum communication load $L^*(r)$ is bounded as
\begin{equation}\label{eq:miniLoad}
\max\left\{QN\dfrac{1-r}{K-1}, \underset{s \in \{1,\ldots, K\}}{\max}  sQN \left(\dfrac{1}{K}-\dfrac{r}{\left\lfloor \frac{K}{s} \right \rfloor}\right) \right\} \leq  L^*(r) \leq L_{\textup{CMR}}(r)\triangleq \dfrac{QN}{K}\left(\dfrac{1}{r}-1\right)+o(N)
\end{equation}
\end{theorem}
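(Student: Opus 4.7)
The plan is to prove the upper bound by explicit construction and the two lower bounds by cut-set arguments of increasing sophistication.

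For the upper bound, I will generalize the Coded MapReduce scheme of Section~III. Setting $p=r$, I partition the $N$ subfiles into $\binom{K}{rK}$ batches $\{B_{\mathcal{S}}\}$, indexed by the $rK$-subsets $\mathcal{S}\subseteq\{1,\ldots,K\}$, each of size $\eta = N/\binom{K}{rK}$, with the indivisibility remainder absorbed into the $o(N)$ term. Every server in $\mathcal{S}$ maps every subfile in $B_{\mathcal{S}}$. In the shuffling phase, for every $(rK+1)$-subset $\mathcal{T}$ of servers and every choice of transmitter $k^*\in\mathcal{T}$, server $k^*$ multicasts sums over $\mathbb{F}_{2^F}$ of the form $\sum_{k'\in\mathcal{T}\setminus\{k^*\}} v_{q_{k'},n_{k'}}$, where each summand is indexed by $q_{k'}\in\mathcal{W}_{k'}$ and $n_{k'}\in B_{\mathcal{T}\setminus\{k'\}}$. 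Because $B_{\mathcal{T}\setminus\{k'\}}$ is mapped by every server in $\mathcal{T}$ except $k'$, the transmitter $k^*$ holds all summands, and every receiver $k'\in\mathcal{T}\setminus\{k^*\}$ can cancel the $rK-1$ summands lying in batches that contain $k'$, recovering its intended value. Pairing values so each needed value appears in exactly one coded transmission, and counting using the identity $\binom{K}{rK+1}(rK+1)=\binom{K}{rK}(K-rK)$ together with $\eta\binom{K}{rK}=N$, yields exactly $\frac{QN(1-r)}{rK}=\frac{QN}{K}\left(\frac{1}{r}-1\right)$ transmissions.

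For the simpler lower bound $L^*(r)\geq \frac{QN(1-r)}{K-1}$, I count new information across the cluster. After the Map phase each server is still missing $\frac{Q}{K}(1-r)N$ intermediate values of $F$ bits each, so the cluster-wide missing information is at least $QN(1-r)F$ bits. A single time slot of the shared link delivers $F$ bits to each of the $K-1$ non-transmitting servers and thus contributes at most $(K-1)F$ bits of new information in total. Dividing gives $T\geq QN(1-r)/(K-1)$ for any Map execution outcome, and hence the same bound on $L^*(r)=\mathbb{E}\{T\}$.

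The second lower bound $L^*(r)\geq sQN\left(\frac{1}{K}-\frac{r}{\lfloor K/s\rfloor}\right)$ is the technical heart of the proof, and I would establish it by adapting the cut-set argument of Maddah-Ali and Niesen from coded caching. Fix $s$ and any ordered $s$-subset of servers; the transmissions heard by the subset, together with its locally held Map outputs, must information-theoretically determine all $sQN/K$ intermediate values needed by the $s$ reducers in the subset. Partitioning the $K$ servers into $\lfloor K/s\rfloor$ disjoint shifts of an $s$-subset, applying the entropy-based cut-set bound once per shift, and averaging over both the shifts and the uniform randomness of which $rK$ of the $pK$ assigned servers actually map each subfile, one shows that the averaged local knowledge covers at most a $r/\lfloor K/s\rfloor$ fraction of each subset's demand, so the shuffling messages must still supply at least $sQN\left(\frac{1}{K}-\frac{r}{\lfloor K/s\rfloor}\right)$ bits worth of values. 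I expect this averaging step to be the main obstacle: precisely quantifying the joint side-information inside an $s$-subset under the Map-execution randomness and showing that the averaged coverage equals exactly the stated $r/\lfloor K/s\rfloor$ fraction requires delicate entropy bookkeeping, whereas the achievability and the first cut-set are comparatively routine. Taking the maximum over $s\in\{1,\ldots,K\}$ then completes the bound.
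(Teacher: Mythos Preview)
Your proposal is correct and reaches the stated bounds, but both lower-bound arguments take routes that differ from the paper's in instructive ways.

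\textbf{Upper bound.} Your $p=r$ batch construction is the clean special case of the paper's scheme. The paper keeps $p\geq r$ general: it assigns subfiles in $\binom{K}{pK}$ batches and lets each subfile be mapped at a uniformly random $rK$-subset of its $pK$ assigned servers, with the $o(N)$ term absorbing the law-of-large-numbers fluctuation in how many subfiles land at each $rK$-set. Your deterministic version avoids that analysis but formally only addresses $p=r$; for the theorem as stated (with $p$ given) you would still owe a sentence on why general $p$ costs no more.

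\textbf{First lower bound.} Your argument is genuinely more elementary. The paper builds a compound setting of $K$ cyclically shifted reducer distributions so that each single server, aggregated over all $K$ distributions, must recover all $QN$ values; summing the resulting per-server cut-sets and invoking Remark~1 gives $(K-1)KT\geq KQN(1-r)$. Your direct count---total cluster-wide missing information $QN(1-r)F$ via the deterministic identity $\sum_k|\mathcal{M}_k'|=rKN$, and at most $(K-1)F$ bits of progress per slot---achieves the same inequality with a single reducer distribution and no compound machinery.

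\textbf{Second lower bound.} Here the two arguments are dual. The paper \emph{fixes} the first $s$ servers and varies the \emph{reducer distribution} over $\lfloor K/s\rfloor$ disjoint key-blocks, so that the same local side-information serves $\lfloor K/s\rfloor$ rounds of shuffling; it then takes expectation over the Map randomness. You instead fix one reducer distribution and vary the \emph{server $s$-subset} over $\lfloor K/s\rfloor$ disjoint shifts, then average the per-shift cut-sets using $\sum_j|\bigcup_{k\in S_j}\mathcal{M}_k'|\leq\sum_k|\mathcal{M}_k'|=rKN$. Both yield $sQN(\tfrac{1}{K}-\tfrac{r}{\lfloor K/s\rfloor})$; your route avoids the compound-distribution device and Remark~1 entirely and even gives a per-realization bound. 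One slip to fix: the averaged covered fraction is $\tfrac{rK}{\lfloor K/s\rfloor}$, not $\tfrac{r}{\lfloor K/s\rfloor}$---your final inequality is correct, but the intermediate sentence is off by a factor of $K$. The step you flagged as the ``main obstacle'' is in fact routine once you sum the union bounds; no delicate entropy bookkeeping is required.
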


The upper bound $L_{\textup{CMR}}(r)$ of the minimum communication load is achieved by our proposed Coded MapReduce scheme that is described and analyzed in detail in the next section. The two lower bounds on the left hand side of (\ref{eq:miniLoad}) are derived by applying the cut-set bounds on the compound extension of the MapReduce system with multiple valid reducer distributions (i.e., which keys are reduced at which servers), and the detailed proofs are provided in Section VI. 

Next, we demonstrate through the following corollary of Theorem~1 that the proposed \mbox{Coded MapReduce} significantly reduce the inter-server communication load compared to the conventional MapReduce approach.  
\begin{corollary}
Consider a job of using $K$ servers to evaluate $Q$ keys in an input file consisting of $N$ subfiles, where each subfile is repetitively assigned to $pK$ servers according to the Coded MapReduce scheme, which will be described in \textup{Section~V-A}, and is randomly and uniformly mapped at $rK$ of those servers for some $r \in \{\frac{1}{K},\ldots,p\}$. Then, the communication load of the proposed Coded MapReduce scheme $L_{\textup{CMR}}(r)$ satisfies
\begin{equation}\label{eq:compare}
\lim_{N \rightarrow \infty} \frac{L_{\textup{CMR}}(r)}{L_{\textup{conv}}}= \frac{1-r}{1-\frac{1}{K}} \cdot \frac{1}{rK},
\end{equation}
where $L_{\textup{conv}}$ is the communication load of the conventional MapReduce approach, as defined in (\ref{eq:conv}).
\end{corollary}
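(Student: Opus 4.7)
The plan is to carry out a direct algebraic manipulation, since the substance is already captured by Theorem~1 (for the achievable Coded MapReduce load) and equation~(\ref{eq:conv}) (for the conventional load), both of which the corollary is free to invoke.

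First I would substitute the two closed-form expressions into the ratio:
\[
\frac{L_{\textup{CMR}}(r)}{L_{\textup{conv}}} = \frac{\frac{QN}{K}\left(\frac{1}{r}-1\right) + o(N)}{QN\left(1-\frac{1}{K}\right)}.
\]
The common factor $QN$ cancels, and dividing numerator and denominator by $N$ turns the $o(N)$ residual into an $o(1)$ term. Passing to the limit $N\to\infty$ kills this term, leaving
\[
\lim_{N\to\infty}\frac{L_{\textup{CMR}}(r)}{L_{\textup{conv}}} = \frac{\frac{1}{K}\left(\frac{1}{r}-1\right)}{1-\frac{1}{K}}.
\]
Finally, rewriting $\frac{1}{K}\left(\frac{1}{r}-1\right) = \frac{1-r}{rK}$ and regrouping yields $\frac{1-r}{1-\frac{1}{K}} \cdot \frac{1}{rK}$, which is precisely~(\ref{eq:compare}).

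There is no genuine obstacle once Theorem~1 is in hand: the entire content of the statement is hidden inside the achievability bound $L_{\textup{CMR}}(r) = \frac{QN}{K}\left(\frac{1}{r}-1\right) + o(N)$, whose construction is deferred to Section~V. The corollary is essentially a reformulation that isolates the multiplicative coding gain contributed by the scheme: the factor $\frac{1-r}{1-1/K}$ is precisely the repetition gain one would already obtain by comparing equations~(\ref{eq:conv}) and~(\ref{eq:uncode}) under the uncoded baseline, while the additional factor $\frac{1}{rK}$ is the genuine coding improvement that grows linearly with the cluster size at fixed replication $r$. Consequently, the only care required in presenting the proof is to note that the $o(N)$ correction in $L_{\textup{CMR}}(r)$ becomes asymptotically negligible relative to the $\Theta(N)$ denominator, which is what legitimizes dropping it in the limit.
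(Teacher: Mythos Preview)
Your proposal is correct and matches the paper's approach: the paper does not give a separate proof of the corollary, treating it as an immediate consequence of Theorem~1 and equation~(\ref{eq:conv}), and your direct substitution and limit computation is exactly the intended derivation.
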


\begin{figure}[htbp]
   \centering
   \includegraphics[width=0.48\textwidth]{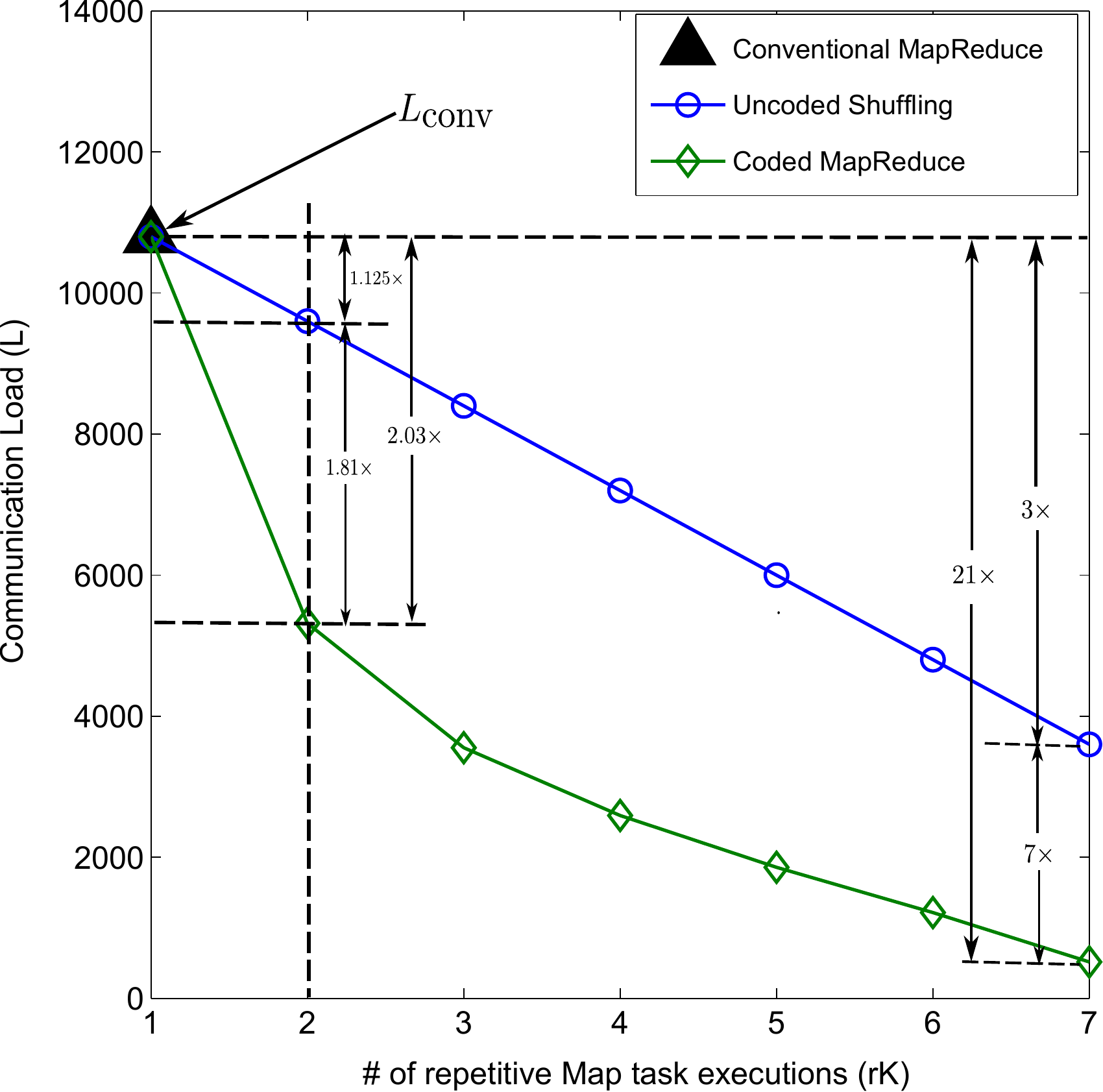}
   \caption{Numerical illustration of the gain from Coded MapReduce in the non-asymptotic region, compared with the conventional MapReduce approach and the uncoded shuffling scheme in reducing the communication load. The performance of the Coded MapReduce is simulated and plotted for a job with $N=1200$ subfiles, $Q=10$ keys, $K=10$ servers and  $pK=7$ repetitive assignments of each subfile.}
   \label{fig:upper}
\end{figure}

\begin{remark}
Corollary~1 demonstrates that compared with the conventional MapReduce approach, for sufficiently large $N$, Coded MapReduce can cut down the communication load
substantially by a multiplicative factor of $\frac{1-\frac{1}{K}}{1-r}(rK) \geq rK$, which grows linearly with the number of servers in the system.  $\hfill \square$
\end{remark} 

\begin{remark}
The first term on the right hand side of (\ref{eq:compare}) (i.e., $\frac{1-r}{1-\frac{1}{K}}$) can be viewed as the \emph{repetition gain}, which is due to knowing values from more subfiles locally at each server, and the second term (i.e., $\frac{1}{rK}$) can be viewed as the coding gain. Note that $\frac{1-r}{1-\frac{1}{K}} \geq 1-r$, hence the repetition gain does not scale with $K$, so the overall gain of Coded MapReduce is mostly due to coding. $\hfill \square$
\end{remark}

\begin{remark}
As an example, we have numerically compared the communication loads required by the conventional \mbox{MapReduce} approach, the uncoded shuffling scheme and the Coded MapReduce for a specific set of parameters in Fig.~\ref{fig:upper}. One can note that for an input file consisting of $1200$ subfiles ($N=1200$), when each subfile is repeatedly assigned to 7 server and actually mapped at 2 servers ($rK=2$), we observe a repetition gain of $1.125 \times$, a coding gain of $1.81 \times$ by Coded MapReduce over the uncoded scheme, and an overall $2.03 \times$ reduction in communication load from the conventional approach by Coded MapReduce. When $rK$ is increased to $7$, i.e., every server has to finish mapping all its assigned subfiles, the repetition gain increases to $3 \times$, the coding gain over the uncoded scheme increases to $7 \times$ and Coded MapReduce achieves an overall $21 \times $ reduction of the communication load compared with the conventional MapReduce approach. Lastly, we notice that the coding gain for this particular job is proportional to $rK$, which is consistent with (\ref{eq:compare}).  $\hfill \square$
\end{remark}

Next we demonstrate via the following theorem that the proposed Coded MapReduce scheme is approximately optimal in the sense that it achieved the minimum communication load within a constant multiplicative factor regardless of the system parameters.
\begin{theorem}
The proposed Coded MapReduce scheme achieves the minimum communication load up to a constant multiplicative factor for any MapReduce job. More precisely:
\begin{equation}
\lim_{N \rightarrow \infty}\frac{L_{\textup{CMR}}(r)}{L^*(r)} < 3+\sqrt{5},
\end{equation}
where $L_{\textup{CMR}}(r)$ is defined in (\ref{eq:miniLoad}). 
\end{theorem}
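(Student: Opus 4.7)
The plan is to bound the ratio $L_{\textup{CMR}}(r)/L^*(r)$ by a two-case analysis in $r$, using one of the two lower bounds from Theorem~1 in each case, and splitting at the threshold $r^* := (3-\sqrt{5})/4$. This value is chosen as the unique root in $(0,1/2)$ of $4r(1-r) = 1 - 2r$ (equivalently $4r^2 - 6r + 1 = 0$), so that $1/r^* = 3+\sqrt{5}$. For the easy case $r \geq r^*$, I would apply the first lower bound $L^*(r) \geq QN(1-r)/(K-1)$; combined with the achievability $L_{\textup{CMR}}(r) = QN(1-r)/(rK) + o(N)$ this gives
\[
\lim_{N \to \infty}\frac{L_{\textup{CMR}}(r)}{L^*(r)} \leq \frac{K-1}{rK} < \frac{1}{r} \leq \frac{1}{r^*} = 3+\sqrt{5}.
\]

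For the harder case $r < r^*$, the plan is to invoke the second lower bound with the specific integer choice $s = \lfloor 1/(2r) \rfloor$. Since $r \leq r^* < 1/2$ and $r \geq 1/K$, this $s$ lies in $\{1,\dots,K\}$. Moreover $s \leq 1/(2r)$ forces $K/s \geq 2rK$, and since $2rK$ is a nonnegative integer by the standing assumption on $r$, we obtain the clean inequality $\lfloor K/s \rfloor \geq 2rK$ with no floor loss. Plugging in,
\[
L^*(r) \geq s QN \!\left(\frac{1}{K} - \frac{r}{\lfloor K/s \rfloor}\right) \geq \frac{sQN}{2K} \geq \frac{QN(1-2r)}{4rK},
\]
where the last step uses $s \geq 1/(2r) - 1 = (1-2r)/(2r)$. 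The ratio is therefore at most $4(1-r)/(1-2r) + o(1)$, and since $r \mapsto 4(1-r)/(1-2r)$ has derivative $4/(1-2r)^2 > 0$ on $[0,1/2)$, we conclude
\[
\lim_{N \to \infty}\frac{L_{\textup{CMR}}(r)}{L^*(r)} \leq \frac{4(1-r)}{1-2r} < \frac{4(1-r^*)}{1-2r^*} = \frac{1}{r^*} = 3+\sqrt{5},
\]
where the penultimate equality is exactly the defining relation $4r^*(1-r^*) = 1-2r^*$.

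The main obstacle is the interaction of the integer constraint on $s$ with the floor $\lfloor K/s \rfloor$ in the second bound: a naive choice of $s$ loses a factor that can push the ratio above the intended constant. The choice $s = \lfloor 1/(2r)\rfloor$ is dictated by the continuous optimum of the second bound (which points to $\lfloor K/s \rfloor \approx 2rK$), and the integrality of $rK$ in the problem setup is precisely what lets us hit this target with $\lfloor K/s \rfloor \geq 2rK$ exactly. The constant $3+\sqrt{5}$ is then pinned down by balancing the two case ratios $1/r$ and $4(1-r)/(1-2r)$; equating them yields the quadratic $4r^2 - 6r + 1 = 0$, whose root in $(0,1/2)$ is $r^* = (3-\sqrt{5})/4$, producing the claimed bound.
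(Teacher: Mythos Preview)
Your proof is correct and follows essentially the same route as the paper's. Both split at the threshold $1/r = 3+\sqrt{5}$ (equivalently $r=r^*$), both choose $s=\lfloor 1/(2r)\rfloor$ in the small-$r$ regime and exploit the integrality of $rK$ to get $\lfloor K/s\rfloor\ge 2rK$, and both arrive at the identical bound $4(1-r)/(1-2r)=4+4/(1/r-2)$ there. The only cosmetic difference is that in the large-$r$ regime the paper takes $s=1$ in the second lower bound (giving ratio $\le 1/r$) whereas you invoke the first lower bound (giving the slightly sharper $(K-1)/(rK)<1/r$); either suffices.
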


\begin{proof}
By Theorem~1 we know that 
\begin{align}
\frac{L_{\textup{CMR}}(r)}{L^*(r)}  & \leq \frac{\frac{QN}{K}\left(\frac{1}{r}-1\right) + o(N)}{\underset{s \in \{1,\ldots, K\}}{\max}  sQN \left(\frac{1}{K}-\frac{r}{\left\lfloor \frac{K}{s} \right \rfloor}\right)}\\
&=  \frac{\frac{1}{r}-1 + \frac{K}{Q} \frac{o(N)}{N}}{\underset{s \in \{1,\ldots, K\}}{\max}  s \left(1-\frac{rK}{\left\lfloor \frac{K}{s} \right \rfloor}\right)}. \label{eq:constGap4}
\end{align}

Taking the limit of (\ref{eq:constGap4}) as $N$ goes to infinity, we have
\begin{align}
\lim_{N \rightarrow \infty}\frac{L_{\textup{CMR}}(r)}{L^*(r)} \leq  \frac{\frac{1}{r}-1}{\underset{s \in \{1,\ldots, K\}}{\max}  s \left(1-\frac{rK}{\left\lfloor \frac{K}{s} \right \rfloor}\right)}. \label{eq:constGap5}
\end{align}

We recall that $r \in \{\frac{1}{K},\ldots,1\}$ and proceed to bound (\ref{eq:constGap5}) in the following two regions:
\begin{itemize}
\item $1 \leq \frac{1}{r} < 3+\sqrt{5}$,
\item $\frac{1}{r} > 3+\sqrt{5}$.
\end{itemize}

\noindent 1) $1 \leq \frac{1}{r} < 3+\sqrt{5}$

In this case, we set $s = 1$ in the lower bound of $L^*(r)$ and (\ref{eq:constGap5}) becomes 
\begin{align}
\lim_{N \rightarrow \infty}\frac{L_{\textup{CMR}}(r)}{L^*(r)} &\leq  \frac{\frac{1}{r}-1}{1-r}\\
= \frac{1}{r} < 3+\sqrt{5}. \label{eq:constGap6}
\end{align}

\noindent 2) $\frac{1}{r} > 3+\sqrt{5}$

In this case, we set $s = \left\lfloor \frac{1}{2r}\right\rfloor$ in the lower bound of $L^*(r)$ to obtain
\begin{align}
\lim_{N \rightarrow \infty}\frac{L_{\textup{CMR}}(r)}{L^*(r)} &\leq  \frac{\frac{1}{r}-1}{\left\lfloor \frac{1}{2r}\right\rfloor \left(1-\frac{rK}{\left\lfloor\frac{K}{\left\lfloor \frac{1}{2r}\right\rfloor}\right\rfloor}\right)}\\
& \leq  \frac{\frac{1}{r}-1}{(\frac{1}{2r}-1) \left(1-\frac{rK}{\left\lfloor\frac{K}{\left\lfloor \frac{1}{2r}\right\rfloor}\right\rfloor}\right)}\\
& \leq  \frac{\frac{1}{r}-1}{(\frac{1}{2r}-1) \left(1-\frac{rK}{\left\lfloor 2rK \right\rfloor}\right)}\\
& \overset{(a)}{=} \frac{\frac{1}{r}-1}{\frac{1}{2}(\frac{1}{2r}-1)}\\
& = \frac{4(\frac{1}{r}-1)}{\frac{1}{r}-2}\\
& = 4 + \frac{4}{\frac{1}{r}-2}\\
& < 3+\sqrt{5}, \label{eq:constGap7}
\end{align}
where (a) is because that $rK$ is a positive integer.

Comparing the two bounds in (\ref{eq:constGap6}) and (\ref{eq:constGap7}) completes the proof.
\end{proof}

\section{Coded MapReduce: General Description and Performance Analysis}
In this section, we present our proposed Coded MapReduce scheme on a general MapReduce job, and analyze the corresponding inter-server communication load.

\subsection{Coded MapReduce}
We consider a MapReduce job of evaluating $Q$ keys on an input file with $N$ subfiles using $K$ servers, where each subfile is assigned to $pK$ servers for Map tasks and is actually mapped at $rK$ of them. Before proceeding to detailed descriptions of the Coded MapReduce scheme, we recall that after the Map tasks execution, $\mathcal{M}_k'$ indicates the set of subfiles mapped at Server $k$, and $\mathcal{A}_n'$ indicates the set of servers that have finished mapping Subfile $n$.

\subsection*{Map Tasks Assignment}
We assume that $N$ is sufficiently large and $N = g \begin{pmatrix}K \\ pK \end{pmatrix}$ for some integer $g$ \footnote{Otherwise we can introduce enough empty subfiles to have an integer-valued $g$.}. The master controller partitions the subfiles into $\begin{pmatrix}K \\ pK \end{pmatrix}$ equal-sized subsets, where each subset contains $g$ unique subfiles. We call each of these subsets a batch of subfiles. For each batch of $g$ subfiles, the master controller chooses a distinct subset of $pK$ servers, and assign all $g$ subfiles in the batch to each of these $pK$ servers.

Because each server belongs to $\begin{pmatrix} K-1 \\ pK-1 \end{pmatrix}$ subsets of size $pK$, each server is assigned $g\begin{pmatrix} K-1 \\ pK-1 \end{pmatrix}$ subfiles by the end of Map tasks assignment, i.e., $|\mathcal{M}_k|=g\begin{pmatrix} K-1 \\ pK-1 \end{pmatrix}$ for all $k \in \{1,\ldots,K\}$.

For the motivating example in Section III, $N=12$, ${Q=K=4}$, $pK =2$, thus we have $g=2$. Every $2$ servers are assigned $2$ unique chapters. 

\subsection*{Map Tasks Execution}
After the Map tasks assignment, each server starts to map all of its assigned subfiles simultaneously.  We assume that the times the $K$ servers spend mapping their assigned subfiles are i.i.d. across servers and subfiles. Thus the probability that any subset of $rK$ servers in $\mathcal{A}_n$ finish mapping Subfile $n$ by the end of Map tasks execution is $\dfrac{1}{\begin{pmatrix}pK \\ rK \end{pmatrix}}$, for all ${n \in \{1,\ldots,N\}}$. Because each group of $rK$ servers are assigned $g\begin{pmatrix}K-rK \\pK-rK \end{pmatrix}$ subfiles, by the end of Map tasks execution, the expected number of subfiles mapped at any group of $rK$ servers is 
\begin{equation*}
\dfrac{g\begin{pmatrix}K-rK \\pK-rK \end{pmatrix}}{\begin{pmatrix}pK \\ rK \end{pmatrix}}.
\end{equation*}

By law of large numbers and the fact that ${N = g\begin{pmatrix}K \\pK \end{pmatrix} > g\begin{pmatrix}K-rK \\pK-rK \end{pmatrix}}$, for $N$ large enough the actual number of subfiles mapped at any group of $rK$ servers is 
\begin{equation}\label{eq:r}
\dfrac{g\begin{pmatrix}K-rK \\pK-rK \end{pmatrix}}{\begin{pmatrix}pK \\ rK \end{pmatrix}} + o(N),
\end{equation}
with high probability.

Before proceeding to describe the Reduce tasks, we define an important quantity $\mathcal{V}^{k}_{\mathcal{S}}$ for $k \in \{1,\ldots,K\}$ and ${{\cal S} \subseteq \{1,\ldots,K\}}$ such that $\mathcal{V}^{k}_{\mathcal{S}}$ is a set of information bits required by Server $k$ and known \emph{exclusively} at all servers in ${\cal S}$. More precisely, $\mathcal{V}^{k}_{{\cal S}} \triangleq \left\{v_{qn}:q \in \mathcal{W}_k, k \notin \mathcal{A}_n'= {\cal S} \right\}$.  

\subsection*{Data Shuffling}
The inter-server communication is carried out as follows to exchange the missing values for Reduce functions:
\begin{enumerate}
\item For each subset ${\cal S}$ of $\{1,\ldots,K\}$ such that ${|{\cal S}|\!=\!rK\!+\!1}$ and each $k \in \mathcal{S}$, arbitrarily partition $\mathcal{V}^{k}_{\mathcal{S}\backslash \{k\}}$ into $rK$ disjoint segments each containing $\frac{\left|\mathcal{V}^{k}_{\mathcal{S} \backslash \{k\}}\right|}{rK}$ bits: 
\begin{equation*}
\mathcal{V}^{k}_{\mathcal{S}\backslash \{k\}} = \left\{\mathcal{V}^{k}_{\mathcal{S}\backslash \{k\},i}: i \in \mathcal{S} \backslash \{k\}\right\},
\end{equation*}
where $\mathcal{V}^{k}_{\mathcal{S}\backslash \{k\},i}$ is the segment associated with Server $i$ in $\mathcal{S} \backslash \{k\}$.

\item For each $i \in \mathcal{S}$, Server $i$ zero-pads all associated segments $\left\{\mathcal{V}^{k}_{\mathcal{S}\backslash \{k\},i}: k\in \mathcal{S} \backslash \{i\} \right\}$ to the length of the longest one and sends the following coded segment:
\begin{equation*}
\underset{k \in \mathcal{S} \backslash \{i\}}\oplus \mathcal{V}^{k}_{\mathcal{S}\backslash \{k\},i},
\end{equation*}
where $\oplus$ denotes bitwise XOR on the zero-padded segments. 
\end{enumerate}

After Server $k$, $k \in \mathcal{S}\backslash \{i\}$ receives the coded segment $\underset{k \in \mathcal{S} \backslash \{i\}}\oplus \mathcal{V}^{k}_{\mathcal{S}\backslash \{k\},i}$ from Server $i$, because Server $k$ knows all the information bits in $\mathcal{V}^{k'}_{\mathcal{S}\backslash \{k'\},i}$ for all $k' \in \mathcal{S} \backslash \{i,k\}$, it can cancel them from the coded segment and recover the intended bits in $\mathcal{V}^{k}_{\mathcal{S}\backslash \{k\},i}$.

Consider the the motivating example in Section III. For the subset of servers $\mathcal{S}=\{1,2,3\}$, ${\mathcal{V}^{1}_{\{2,3\}} = \{a_7,a_8\}}$, $\mathcal{V}^{2}_{\{1,3\}} = \{b_3,b_4\}$, and $\mathcal{V}^{3}_{\{1,2\}} = \{c_1,c_2\}$. These sets of values are segmented such that
\begin{itemize}
\item  $\mathcal{V}^{1}_{\{2,3\},2} = \{a_7\}$, $\mathcal{V}^{1}_{\{2,3\},3} = \{a_8\}$,

\item  $\mathcal{V}^{2}_{\{1,3\},1} = \{b_3\}$, $\mathcal{V}^{2}_{\{1,3\},3} = \{b_4\}$,

\item  $\mathcal{V}^{3}_{\{1,2\},1} = \{c_1\}$, $\mathcal{V}^{3}_{\{1,2\},2} = \{c_2\}$.
\end{itemize} 

Then during the date shuffling,
\begin{itemize}
\item Server 1 sends a coded pair $(BC,b_3+c_1)[3,1]$,
\item Server 2 sends a coded pair $(AC,a_7+c_2)[7,2]$,
\item Server 3 sends a coded pair $(AB,a_8+b_4)[8,4]$.
\end{itemize}

%

\begin{remark}
The idea of efficiently creating and exploiting coded multicasting was initially proposed in the context of cache networks in~\cite{maddah2014fundamental, maddah2013decentralized}, and extended in ~\cite{ji2014fundamental, karamchandani2014hierarchical}, where caches pre-fetch  part of the content in a way that coding opportunities can be optimally exploited to reduce network traffic during the content delivery. We notice that interestingly, such opportunities exist in a general MapReduce environment if each subfile is repeatedly mapped at different servers, and we propose Coded MapReduce to utilize them to significantly reduce the shuffling load.  $\hfill \square$
\end{remark}

We summarize the proposed Coded MapReduce scheme in Algorithm 1.
\begin{algorithm}[h]
	\caption{Coded MapReduce}
	\label{alg1}
	\begin{algorithmic}[1]
	\Procedure{Map Tasks Assignment}{}
	\State $g \leftarrow  \left\lceil\dfrac{N}{\begin{pmatrix} K \\ pK \end{pmatrix}} \right\rceil$
	\State $\mathcal{P} \leftarrow \left\{\mathcal{T}: \mathcal{T} \subseteq \{1,\ldots K\}, |\mathcal{T}|=pK\right\}$
	\State Partition the $N$ subfiles into $\left\{\mathcal{U}_{\mathcal{T}}:\mathcal{T} \in \mathcal{P}, |\mathcal{U}_{\mathcal{T}}|=g \right\}$
	\For {$\mathcal{T} \in \mathcal{P}$}
	\State Assign $\mathcal{U}_{\mathcal{T}}$ to all servers in $\mathcal{T}$
	\EndFor
	\EndProcedure
\\
	\Procedure{Data Shuffling}{}
		\For {${\cal S} \subseteq \{1,\ldots,K\}$: $|{\cal S}|=rK+1$}
		\For {$k \in \mathcal{S}$}
		\State $\mathcal{V}^{k}_{{\cal S}\backslash \{k\}} \leftarrow \left\{v_{qn}:q \in \mathcal{W}_k,  \mathcal{A}_n'= {\cal S} \backslash \{k\} \right\}$
		\State Partition $\mathcal{V}^{k}_{\mathcal{S} \backslash \{k\}}$ into $rK$ disjoint segments of equal size: $\left\{\mathcal{V}^{k}_{\mathcal{S}\backslash \{k\},i}: i \in \mathcal{S} \backslash \{k\}\right\}$
		\EndFor
		
		\For {$i \in \mathcal{S}$}
		\State Server $i$ zero-pads all the segments in  $\left\{\mathcal{V}^{k}_{\mathcal{S}\backslash \{k\},i}: k \in \mathcal{S} \backslash \{i\} \right\}$ to have length $\underset{k \in \mathcal{S} \backslash \{i\}}{\max}\left|\mathcal{V}^{k}_{\mathcal{S} \backslash \{k\},i}\right|$ 
		\State Server $i$ sends bitwise XOR of zero-padded segments: $\underset{k \in \mathcal{S} \backslash \{i\}}{\oplus} \mathcal{V}^{k}_{\mathcal{S}\backslash \{k\},i}$
		\EndFor
		
		\EndFor
		\EndProcedure
	\end{algorithmic}
\end{algorithm}

\subsection{Performance of Coded MapReduce}
We first demonstrate that the inter-server communication of Coded MapReduce successfully delivers all required bits for reduction. To start, we pick an arbitrary server, say Server $k$, and consider a required bit to execute its reducers. If this bit is from subfiles that are mapped at Server $k$ (i.e., $\mathcal{M}_k'$), it is readily available for reduction and no communication is needed. However if it is from some Subfile $n$ that is not mapped at Server $k$, i.e., outside $\mathcal{M}_k'$, we recall that \mbox{$\mathcal{A}_n'$ ($k \notin \mathcal{A}_n'$)} denotes the set of servers that have mapped Subfile $n$ and there are $rK$ of them. Now we consider Line 11 of Algorithm 1 for the iteration of $\mathcal{S} = \{k\} \cup \mathcal{A}_n'$. Suppose that this bit is in the segment $\mathcal{V}^{k}_{\mathcal{S}\backslash \{k\},i}$ of some Server $i \in \mathcal{A}_n'$, because for all $k' \in \mathcal{S} \backslash \{i,k\}$, $\{i,k\} \subseteq \mathcal{S} \backslash \{k'\}$ and Server $k$ knows all the bits in $\mathcal{V}^{k'}_{\mathcal{S}\backslash \{k'\},i}$,  
Server $k$ can decode its required bit from the coded segment $\underset{k \in \mathcal{S} \backslash \{i\}}{\oplus} \mathcal{V}^{k}_{\mathcal{S}\backslash \{k\},i}$ sent by Server $i$ by cancelling $\mathcal{V}^{k'}_{\mathcal{S}\backslash \{k'\},i}$ for all $k' \in \mathcal{S} \backslash \{i,k\}$. Similarly, by the end of the shuffling phase, Server $k$ decodes all required bits for its reducers from coded segments transmitted by other servers. The same arguments directly apply to all servers and we have demonstrated that Coded MapReduce successfully provides all values needed for reduction. 

Next, we analytically characterise the inter-server communication load of Coded MapReduce.

\subsection*{Communication Load $L_{\textup{CMR}}(r)$ (Upper Bound on $L^*(r)$ in Theorem 1)}
In each iteration of the inter-server communication in Algorithm 1, by (\ref{eq:r}) we know that after Map tasks execution, every $rK$ servers share the values of all keys in ${\dfrac{g \begin{pmatrix}K-rK \\pK-rK \end{pmatrix}}{\begin{pmatrix}pK \\ rK \end{pmatrix}} + o(N)}$ subfiles, thus for any subset $\mathcal{S}$ of $rK+1$ servers and sufficiently large $N$,
\begin{equation}
\left|\mathcal{V}^{k}_{\mathcal{S}\backslash \{k\}}\right| = \dfrac{Q g\begin{pmatrix}K-rK \\pK-rK \end{pmatrix}}{K \begin{pmatrix} pK \\ rK \end{pmatrix}}F + o(N)F,
\end{equation}
for all $k \in \mathcal{S}$, given that Server $k$ needs $\frac{Q}{K}$ values for its reducers from each subfile.

For each $\mathcal{S} \subseteq \{1,\ldots,K\}$ with $|\mathcal{S}|=rK + 1$, by Lines 17 and 18 of Algorithm 1, Server $i$ sends a coded segment of length 
\begin{equation*}
\dfrac{Q g\begin{pmatrix}K-rK \\pK-rK \end{pmatrix}}{K \begin{pmatrix} pK \\ rK \end{pmatrix} rK}F + o(N)F,
\end{equation*}
for all $i \in \mathcal{S}$. Therefore, the total number of bits sent over the shared link in $\mathcal{S}$ is  
\begin{equation*}
\dfrac{Q g \begin{pmatrix}K-rK \\pK-rK \end{pmatrix}(rK+1)}{K \begin{pmatrix} pK \\ rK \end{pmatrix} rK}F + o(N)F.
\end{equation*}

Because Algorithm 1 has a total of $\begin{pmatrix} K \\ rK+1 \end{pmatrix}$ iterations, the communication load achieved by the proposed Coded MapReduce scheme, normalized by $F$, is
\begin{align*}
L_{\text{CMR}}(r) =& \begin{pmatrix} K  \\rK+1 \end{pmatrix} \!\dfrac{Qg \begin{pmatrix} K-rK \\ pK-rK \end{pmatrix} (rK+1)}{K\begin{pmatrix} pK \\ rK \end{pmatrix} rK} + o(N)\\
=& \dfrac{Q}{K} \times \dfrac{g \,K!}{(rK+1)! \, (K-rK-1)!} \\
&\times \dfrac{(K-rK)! \,(rK+1)}{(pK-rK)! \, (K-pK)!}\\
&\times \dfrac{(rK)! \, (pK-rK)!}{(pK)!\, rK} +o(N)\\
\overset{(a)}{=}& \dfrac{QN}{K}\left(\dfrac{1}{r}-1\right)+o(N),
\end{align*}
where (a) is because that $N = g\begin{pmatrix}K \\ pK \end{pmatrix}$.


\section{Lower Bounds on $L^*(r)$}
In this section, two lower bounds on the inter-server communication load $L^*(r)$ are derived by applying the cut-set bounds on the compound extensions of multiple valid reducer distributions (i.e., which servers reduce which keys). Before proceeding to the formal proofs, we re-emphasize that as highlighted in Remark~1, the communication loads of all valid reducer distributions are identical. As a first step, we partition the indices of the keys $\{1,\ldots,Q\}$ evenly into $K$ groups $\mathcal{G}_1,\ldots,\mathcal{G}_K$ such that $\mathcal{G}_{i} = \left\{\frac{(i-1)Q}{K}+1,\ldots, \frac{iQ}{K}\right\}$, for all $i \in \{1,\ldots,K\}$. Given a reducer distribution $\mathcal{D} = \left(\mathcal{W}_1,\ldots,\mathcal{W}_K\right)$ that specifies which keys are reduced at which servers, we denote the messages sent by Server $k$ during the data shuffling as $X_k^{\mathcal{D}}$, for all $k \in \{1,\ldots,K\}$ with $R_{k}^{\mathcal{D}}F$ total number of bits.

\subsection*{First Bound}
We consider the following $K$ valid reducer distributions on the $K$ servers:
\begin{equation}\label{eq:reducer_dis}
\begin{aligned}
\mathcal{D}_1 &= \left(\mathcal{G}_1,\mathcal{G}_2,\ldots,\mathcal{G}_K\right),\\
\mathcal{D}_2 &= \left(\mathcal{G}_2,\mathcal{G}_3,\ldots,\mathcal{G}_1\right),\\
& \vdots\\
\mathcal{D}_K &= \left(\mathcal{G}_K,\mathcal{G}_1,\ldots,\mathcal{G}_{K-1}\right).
\end{aligned}
\end{equation}

For a particular reducer distribution $\mathcal{D}_i$, $i \in \{1,\ldots,K\}$, we denote its $k$th entry (i.e., the set of indices of the keys reduced at Server $k$) as $\mathcal{D}_{i,k}$. 

Now we consider the compound setting of all $K$ reducer distributions at Server $k$, $k \in \{1,\ldots,K\}$. Server $k$ needs to evaluate all keys in $\left\{\mathcal{D}_{i,k}: i \in \{1,\ldots,K\} \right\}$, which is equal to the set of all keys $\{1,\ldots,Q\}$ by the construction of the reducer distributions in~(\ref{eq:reducer_dis}). During the shuffling phase for a particular reducer distribution $\mathcal{D}_i$, Server $k$ receives $\sum \limits_{k' \neq k} R_{k'}^{\mathcal{D}_i}F$ information bits from other servers, and overall $\sum \limits_{i=1}^K \sum \limits_{k' \neq k} R_{k'}^{\mathcal{D}_i}F$ bits of information across all $K$ reducer distributions. Together with its mapping outcomes $\{v_{qn}: q \in \{1,\ldots,Q\}, n \in \mathcal{M}_k'\}$, Server $k$ should be able to evaluate all $Q$ keys after receiving all messages from all other servers in all reducer distributions. Thus we are essentially considering a cut at Server $k$ separating $\left(\{v_{qn}: q \in \{1,\ldots,Q\}, n \in \mathcal{M}_k'\}, \left\{X_{k'}^{\mathcal{D}_1},\ldots,X_{k'}^{\mathcal{D}_K}: k' \neq k \right\}\right)$ and $\left\{v_{qn}:q \in \{1,\ldots,Q\}, n \in \{1,\ldots,N\}\right\}$, and we have
\begin{equation}
Q|\mathcal{M}_k'|F + \sum \limits_{i=1}^K \sum \limits_{k' \neq k} R_{k'}^{\mathcal{D}_i}F \geq QNF.
\end{equation}

Summing up the cut-set bounds of all $K$ servers, we have
\begin{align}
\sum \limits_{k=1}^K Q |\mathcal{M}_{k}'|F +  \sum \limits_{k=1}^K \sum \limits_{i=1}^K \sum \limits_{k' \neq k} R_{k'}^{\mathcal{D}_i}F &\geq KQNF, \nonumber\\
\Rightarrow  rKQN + \sum \limits_{i=1}^K (K-1)\sum \limits_{k=1}^K R_{k}^{\mathcal{D}_i} &\overset{(b)}{\geq}  KQN, \\
\Rightarrow rKQN + K(K-1)T &\overset{(c)}{\geq}  KQN, \label{eq:perServer}
\end{align}
where (b) is due to the Map tasks execution that each subfile is completely mapped at $rK$ servers, and (c) is because that by Remark~1 the communication load of the data shuffling scheme is independent of the reducer distribution.

Taking expectations of both sides of (\ref{eq:perServer}) over the Map tasks executions $\{\mathcal{M}_k'\}_{k=1}^K$, we obtain the following lower bound of the minimum inter-server communication load: 
\begin{equation}
L^*(r) \geq QN\dfrac{1-r}{K-1}.
\end{equation}

\subsection*{Second Bound}
Let $s \in \{1,\ldots,K\}$ and we focus on the keys evaluated at the first $s$ servers. We consider a set of $\left\lfloor \frac{K}{s} \right\rfloor$ valid reducer distributions such that the indices of the keys reduced by the first $s$ servers are respectively specified as:
\begin{equation}\label{eq:partial}
\begin{aligned}
\mathcal{D}_{1,(1,\ldots,s)} &= \left(\mathcal{G}_1,\ldots,\mathcal{G}_s\right),\\
\mathcal{D}_{2,(1,\ldots,s)} &= \left(\mathcal{G}_{s+1},\ldots,\mathcal{G}_{2s}\right),\\
& \vdots\\
\mathcal{D}_{\left\lfloor \frac{K}{s} \right\rfloor,(1,\ldots,s)} &= \left(\mathcal{G}_{\left(\left\lfloor \frac{K}{s} \right\rfloor-1\right)s+1},\ldots,\mathcal{G}_{\left\lfloor \frac{K}{s} \right\rfloor s}\right),
\end{aligned}
\end{equation}
where $\mathcal{D}_{i,(1,\ldots,s)}$, $i \in \{1,\ldots,\left\lfloor \frac{K}{s} \right\rfloor\}$ denotes indices of the keys evaluated by the first $s$ servers in $i$th reducer distribution. 

Now for the compound setting of the $\left\lfloor \frac{K}{s} \right\rfloor$ reducer distributions in (\ref{eq:partial}), Server $k$, $k \in \{1,\ldots,s\}$, needs to reduce the keys in $\left\{\mathcal{G}_{(i-1)s+k}: i\in \{1,\ldots, \left\lfloor \frac{K}{s} \right\rfloor\}\right\}$, given the transmitted messages for all reducer distributions $\left\{X_{k}^{\mathcal{D}_i}:k \in \{1,\ldots,K\}, i\in \{1,\ldots, \left\lfloor \frac{K}{s} \right\rfloor\}\right\}$ and its mapping outcomes $\{v_{qn}:q \in \{1,\ldots,Q\},n \in \mathcal{M}_k'\}$.

We consider the cut containing all $s$ servers that separates all values needed for reduction and the transmitted messages in all reducer distributions in addition to their local mapping outcomes, i.e.,  $\left\{v_{qn}:q \in \left\{1,\ldots,\frac{Q}{K} \left\lfloor \frac{K}{s} \right\rfloor s \right\}, n \in \{1,\ldots,N\}\right\}$ and $\left(\left\{v_{qn}:q \in \{1,\ldots,Q\}, n \in \mathcal{M}_k'\right\}_{k=1}^{s}, \left\{X_{k}^{\mathcal{D}_i}:k \in \{1,\ldots,K\}, i\in \{1,\ldots, \left\lfloor \frac{K}{s} \right\rfloor\}\right\}\right)$, we have
\begin{align}
Q\sum \limits_{k=1}^s |\mathcal{M}_k'|F + \sum \limits_{i=1}^{\left\lfloor \frac{K}{s} \right\rfloor} \sum \limits_{k=1}^K R_k^{\mathcal{D}_i}F &\geq \frac{Q}{K} \left\lfloor \frac{K}{s} \right\rfloor s NF, \nonumber \\
\Rightarrow Q\sum \limits_{k=1}^s |\mathcal{M}_k'| + \left\lfloor \frac{K}{s} \right\rfloor T &\geq \left\lfloor \frac{K}{s} \right \rfloor \dfrac{sQN}{K}. \label{eq:bound2}
\end{align}
Taking expectations of both sides of (\ref{eq:bound2}) over the Map tasks executions $\{\mathcal{M}_k'\}_{k=1}^K$ and using the fact that $\mathbb{E}\{|\mathcal{M}_k'|\}=rN$ for all $k \in \{1,\ldots,K\}$, we have
\begin{align}
srQN + \left\lfloor \frac{K}{s} \right\rfloor L(r) &\geq \left\lfloor \frac{K}{s} \right \rfloor \dfrac{sQN}{K}, \nonumber \\
\Rightarrow L(r) &\geq  sQN \left(\dfrac{1}{K}-\dfrac{r}{\left\lfloor \frac{K}{s} \right \rfloor}\right). \label{eq:firstS}
\end{align}

Because (\ref{eq:firstS}) holds for all $s \in \{1,\ldots,K\}$, we obtain another lower bound of $L^*(r)$:
\begin{equation}
L^*(r) \geq  \underset{s \in \{1,\ldots, K\}}{\max}  sQN \left(\dfrac{1}{K}-\dfrac{r}{\left\lfloor \frac{K}{s} \right \rfloor}\right). \label{eq:secLower}
\end{equation}

For the example in Section~III where $Q=4$, $N=12$, $K=4$ and $r = \frac{1}{2}$, the first lower bound is tighter than the second, yielding that $L^*(\frac{1}{2}) \geq 8$. 

\section{Map Processing Time vs. Communication Load of Coded MapReduce}
As Coded MapReduce slashes the inter-server communication load through repetitive Map operations across servers, processing more Map tasks incurs extra delay to the job execution. In this section, we analytically and numerically investigate this effect and the tradeoff between the processing time of the Map tasks and the inter-server communication load of Coded MapReduce. 

\subsection{Map Processing Time of Coded MapReduce}
We first recall that after the Map tasks assignment, each server starts to process the assigned $pN$ subfiles until each of the $N$ subfiles has been mapped at $rK$ distinct servers. We assume that all subfiles are of the same size and are simultaneously available for processing. We employ the processor-sharing (PS) model in~\cite{Kleinrock,Coffman1970} as our service discipline where the processing rate of each server, denoted as $\mu$, is evenly distributed among all assigned $pN$ Map tasks. We assume that the $K$ servers start their respective Map tasks at the same time and the Map processing times of the subfiles within and across servers are i.i.d. exponential random variables with rate $\frac{\mu}{pN}$ \footnote{For simplicity, we do not consider the rate change due to the completions of Map tasks and we assume that all subfiles are being mapped at the constant rate $\frac{\mu}{pN}$ throughout the Map tasks execution.}.  Suppose that the Map tasks are carried out error-free, then the Map processing time of a single subfile say Subfile $n$, $S_n$ is the $rK$th-order statistic of $pK$ i.i.d. exponential random variables with rate $\frac{\mu}{pN}$, and the probability density function of $S_n$ by~\cite{papoulis2002probability} is
\begin{align}
f_{S_n}(s) &= pK f(s) \begin{pmatrix}pK-1 \\ rK-1\end{pmatrix} F(s)^{rK-1}(1-F(s))^{pK-rK}\nonumber \\
&= pK \frac{\mu}{pN} e^{-\frac{\mu}{pN} s} \begin{pmatrix}pK-1 \\ rK-1\end{pmatrix}(1-e^{-\frac{\mu}{pN} s})^{rK-1}e^{-\frac{\mu}{pN} s(pK-rK)} \nonumber\\
&= \frac{K}{N}\mu \begin{pmatrix}pK-1 \\ rK-1\end{pmatrix}(1-e^{-\frac{\mu}{pN} s})^{rK-1}e^{-\frac{\mu}{pN}(pK-rK+1)s}, \label{eq:pdf}
\end{align}
where $f(s)=\frac{\mu}{pN} e^{-\frac{\mu}{pN} s}$ and $F(s) = 1-e^{-\frac{\mu}{pN} s}$ are the PDF and CDF of an exponential random variable with rate $\frac{\mu}{pN}$ respectively.

We can further derive the CDF of the processing time of Subfile $n$ (i.e., $S_n$) as
\begin{align}
F_{S_n}(s) &= \int_{0}^{s} f_{S_{n}}(t) dt \nonumber \\
&= \sum \limits_{j=0}^{rK-1}pK \begin{pmatrix}pK-1 \\ rK-1\end{pmatrix} \begin{pmatrix}rK-1 \\ j\end{pmatrix}(-1)^{rK-1-j}\frac{1-e^{-\frac{\mu}{pN}(pK-j)s}}{pK-j}.\label{eq:cdf}
\end{align}

The average processing time to map Subfile $n$, $n\in \{1,\ldots,N\}$, at $rK$ servers out of the $pK$ servers it is assigned to, i.e., the expected value of $S_n$, is (see e.g.~\cite{arnold1992first})
\begin{equation}
\mathbb{E}\{S_n\} = \frac{pN}{\mu}\sum \limits_{j=1}^{rK} \frac{1}{pK+1-j}.
\end{equation}

Because the Map tasks execution continues until each of the $N$ subfiles is mapped at $rK$ servers, the overall Map processing time of $N$ subfiles, $S$ is
\begin{equation*}
S = \max\{S_1,\ldots,S_N\},
\end{equation*}
where $S_1,\ldots,S_N$ are i.i.d. random variables with the probability density function in~(\ref{eq:pdf}) and the cumulative distribution function in~(\ref{eq:cdf}).

The distribution function of the overall processing time $S$ can be calculated as $F_S(s) = \prod \limits_{n=1}^N \textup{Pr}(S_n \leq s) = \left(F_{S_{n}}(s)\right)^N$, and thus the average overall processing time for the Map tasks is
\begin{align*}
\mathbb{E}\{S\} &= \int_{0}^{\infty} (1-F_{S}(s)) ds\\
&= \int_{0}^{\infty} \left(1-\left(F_{S_{n}}(s)\right)^N \right) ds,
\end{align*}
where $F_{S_n}(s)$ is calculated in (\ref{eq:cdf}).

\subsection{Numerical Evaluations}
In Fig.~\ref{fig:tradeoff_indi} and Fig.~\ref{fig:tradeoff}, we have respectively plotted the average processing time to map one subfile and to map all subfiles and its corresponding shuffling load, using Coded MapReduce. As expected, when increasing the number of repetitive mapping ($rK$), the Map processing time becomes longer while less communication is required for shuffling. Coded MapReduce provides a flexible framework for system designers to balance between the time spent in the Map phase and the shuffling phase of MapReduce jobs, based on the underlying infrastructures (e.g., server processing speed and network speed), to optimize the run-time performance. 

\begin{figure}[htbp]
   \centering
   \includegraphics[width=0.5\textwidth]{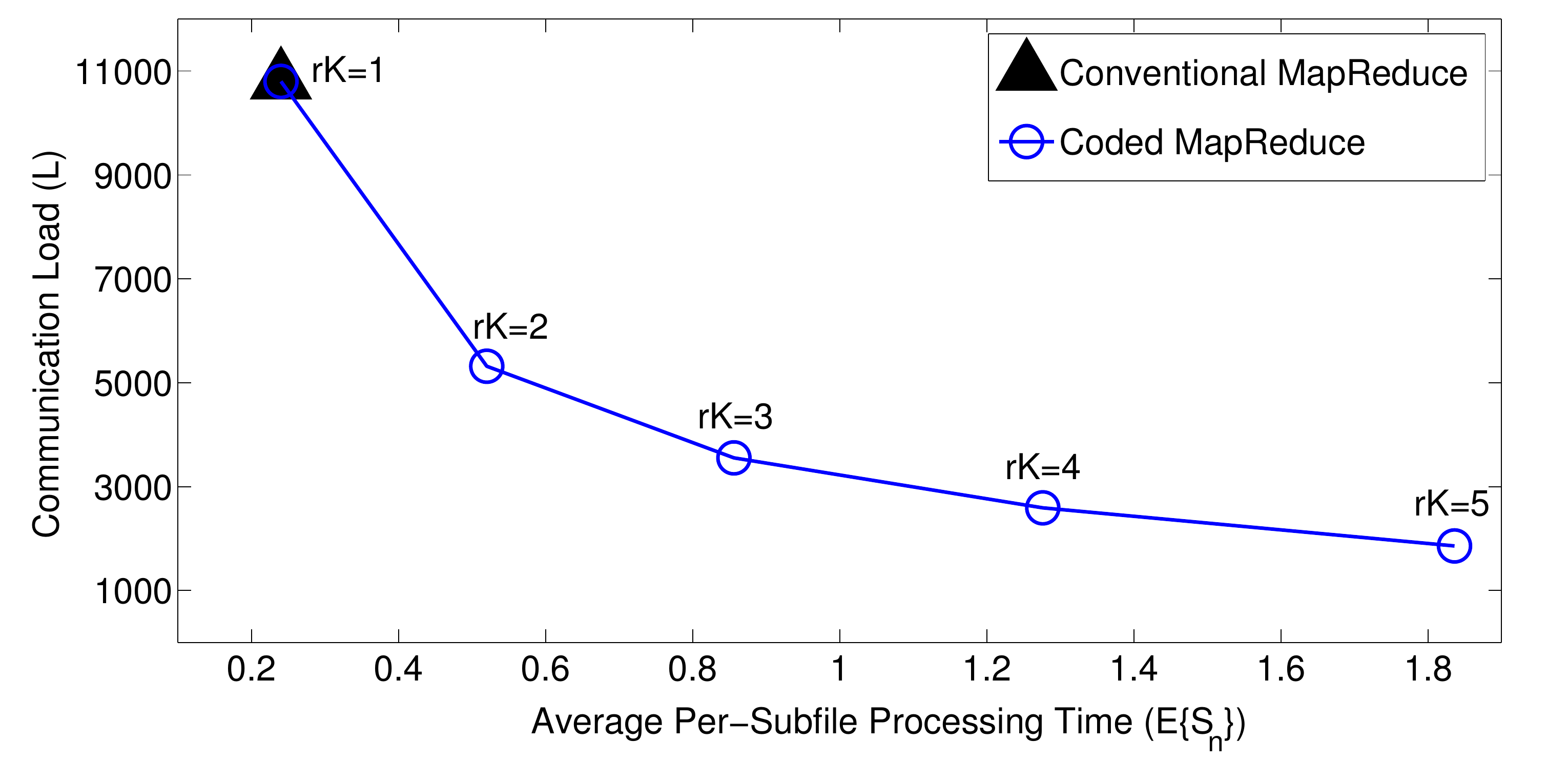}
   \caption{The performance of the Coded MapReduce for a job with $N=1200$ subfiles, $Q=10$ keys, $K=10$ servers and  $pK=7$ repetitive assignments of each subfile. The Map processing rate at each server $\mu = 500$. This figure demonstrates the tradeoff between the processing time to map one subfile at $rK$ servers and the communication load in the following shuffling phase.}
   \label{fig:tradeoff_indi}
   \vspace{-4 mm}
\end{figure}

\begin{figure}[htbp]
   \centering
   \includegraphics[width=0.48\textwidth]{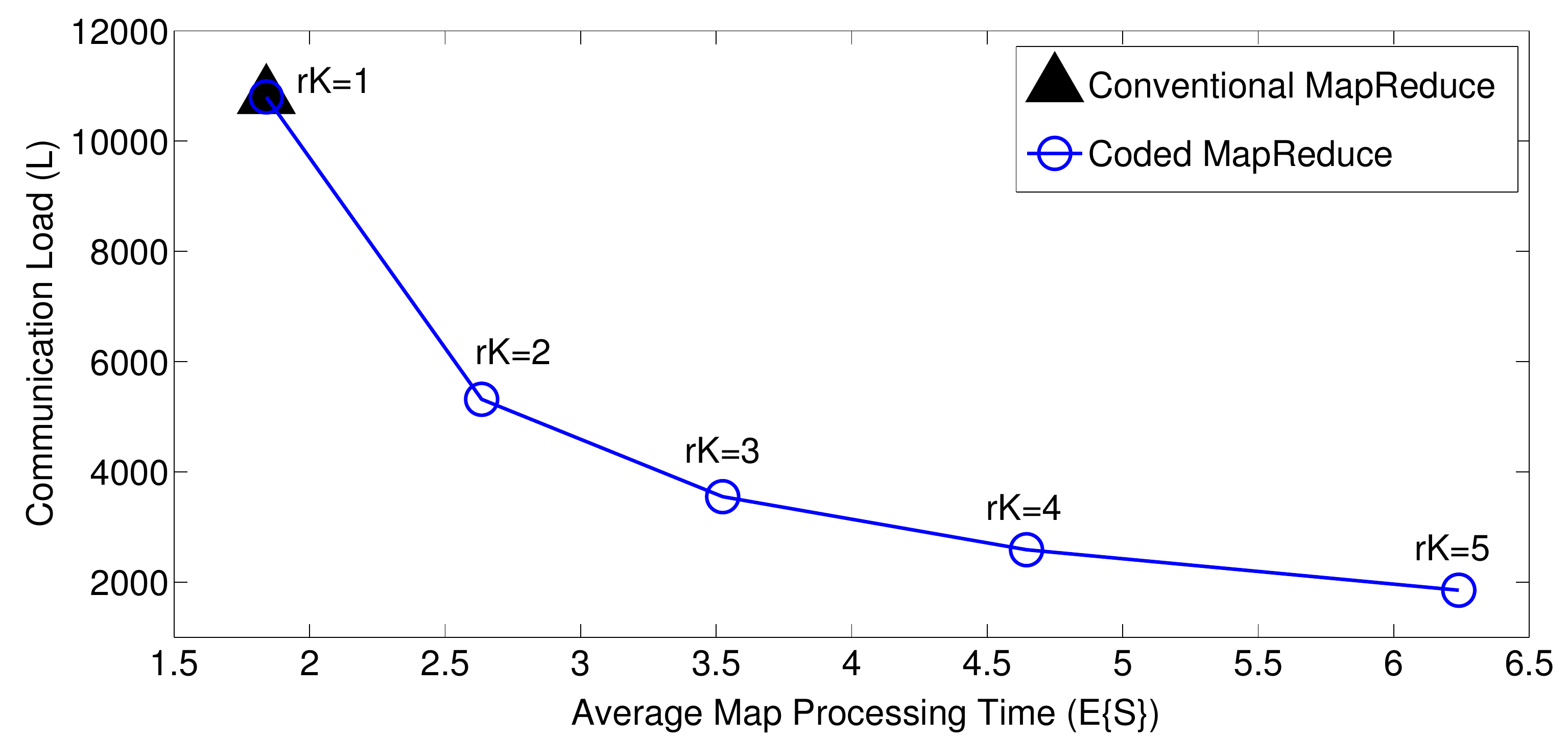}
   \caption{The performance of the Coded MapReduce for a job with $N=1200$ subfiles, $Q=10$ keys, $K=10$ servers and  $pK=7$ repetitive assignments of each subfile. The Map processing rate at each server $\mu = 500$. This figure demonstrates the tradeoff between the overall Map processing time and the communication load in the following shuffling phase.}
   \label{fig:tradeoff}
     \vspace{-2 mm}
\end{figure}

\section{Conclusion}
In this paper, we propose Coded MapReduce, a joint framework that aims to improve the performance of data shuffling in a MapReduce environment. We demonstrate through analytical and numerical analysis that, by carefully assigning repetitive Map tasks onto different servers and smartly coding the transmitted message bits across keys and data blocks, Coded MapReduce can slash the inter-server communication load by a factor that grows linearly with number of servers in the system. We also demonstrate the optimality of Coded MapReduce by showing that it achieves the minimum communication load within a constant multiplicative factor. Further, we explore the inherent tradeoff of Coded MapReduce  between the computation load of the Map tasks and the communication load of the data shuffling, shedding light on optimizing Coded MapReduce to minimize the overall job completion time.  An interesting future direction is to consider a software implementation of Coded MapReduce in Hadoop clusters to demonstrate the coding gain for reducing the inter-server communication load of the shuffling phase in such clusters.


\bibliographystyle{IEEEtran}
\bibliography{ref}

\end{document}